% The document compiles the main text and supplementary material into
% one document. For compiling purposes it was easiest to make them as
% one document and then split the pdf. To use different citation
% styles for MT and SI, just recompile before each split of the pdf.

%To create a list of files used, excluding the bibliography,
%enable the "bundledoc" command:
%\RequirePackage{snapshot}
%Package everything up with
% bundledoc --localonly --include=nature_nourl.bst --include=metabib.bib --exclude=sci_report_randomness.bbl --exclude=sci_report_randomness.out --directory=tmp sci_report_randomness.dep ; mv sci_report_randomness.tar.gz texark/srr.tar.gz
%%%

\documentclass[12pt]{article}

\usepackage{times}

\topmargin 0.0cm
\oddsidemargin 0.2cm
\textwidth 16cm 
\textheight 21cm
\footskip 1.0cm

%other macros and packages - P.B.
\usepackage{xcolor}
\usepackage{graphicx}
\usepackage{epstopdf}
%\usepackage{hyperref}
%\hypersetup{pdfstartview=} %stops hyperref from over-riding pdf
                           %reader's default zoom level
\usepackage{url}
\usepackage{amssymb}
\usepackage{amsmath}
\usepackage{amsfonts}
\usepackage{amsthm}
\usepackage[normalem]{ulem}
\usepackage{stmaryrd}

\newtheorem{Theorem}{Theorem}
\newtheorem{Lemma}{Lemma}
% \newtheorem{Corollary}{Corollary}
% \newcommand{\proof}{\paragraph*{Proof.}}
% \newcommand{\qed}{\hspace*{\fill}\rule{2.5mm}{2.5mm}%
% \vspace*{8pt}\par}

%This package adds line numbers as requested by Nature. Sometimes it introduces some error messages that I can't figure out how to eliminate, but if I ignore them it still compiles, so oh well.
\usepackage[displaymath, mathlines]{lineno}
%\linenumbers

%for superscirpt citations, Nature style, insert [superscirpt] between \usepackage and {cite}
\usepackage{cite}

%for double spacing
\usepackage{setspace}

\renewcommand{\eqref}[1]{Eq.~\ref{#1}}
%MK: Add parentheses if not used as a noun. Otherwise, no parentheses.

\newif\ifcmnt
%  Use \cmntfalse to not see comments when it is latex'ed
%\cmntfalse
%  Use \cmnttrue to see the comments
\cmnttrue
\ifdefined\cmntsoff\cmntfalse\fi

\ifcmnt
    \providecommand{\aucmnt}[1]{#1}
\else
    \providecommand{\aucmnt}[1]{}
\fi

\newcommand{\ignore}[1]{}

\newcommand{\bfABXY}{\mathbf{ABXY}}
\newcommand{\Vthresh}{v_{\mathrm{thresh}}}

\title{Experimentally Generated Randomness Certified by the Impossibility of Superluminal Signals}

% Above title is 90 characters. 

\author
{Peter Bierhorst,$^{1\ast}$ Emanuel Knill,$^{1,6}$  Scott Glancy,$^{1}$ Yanbao Zhang,$^{1\dagger}$\\
 Alan Mink,$^{2,3}$ Stephen Jordan,$^{2}$ Andrea Rommal,$^{4}$ Yi-Kai Liu,$^{2}$ \\
 Bradley Christensen,$^{5}$ Sae Woo Nam,$^{1}$ Martin J. Stevens,$^{1}$  Lynden K. Shalm$^{1}$\\
\\
\normalsize{$^{1}$National Institute of Standards and Technology,}
\normalsize{Boulder 80305, CO, USA}\\
\normalsize{$^{2}$ National Institute of Standards and Technology,}
\normalsize{Gaithersburg 20899, MD, USA}\\
\normalsize{$^{3}$ Theiss Research,}
\normalsize{La Jolla, CA, 92037, USA}\\
\normalsize{$^{4}$ Muhlenberg College, Allentown, PA, 18104, USA}\\
\normalsize{$^{5}$ Department of Physics, University of Wisconsin, Madison, WI,
53706, USA}\\
\normalsize{$^{6}$Center for Theory of Quantum Matter, University of Colorado, Boulder, Colorado 80309, USA}\\
\normalsize{$^{\dagger}$Present address: NTT Basic Research Laboratories and NTT Research Center for Theoretical}\\ 
\normalsize{Quantum Physics, NTT Corporation, 3-1 Morinosato-Wakamiya, Atsugi, Kanagawa 243-0198, Japan}\\
\normalsize{$^\ast$ E-mail:  peter.bierhorst@nist.gov}
}

\date{}

%%%%%%%%%%%%%%%%% END OF PREAMBLE %%%%%%%%%%%%%%%%

\begin{document}

% Double-space the manuscript.
%\baselineskip24pt

% Make the title.

\maketitle

%for doublespacing, uncomment the following
%\doublespacing

{\bf From dice to modern complex circuits, there have been many
  attempts to build increasingly better devices to generate random
  numbers. Today, randomness is fundamental to security and
  cryptographic systems, as well as safeguarding privacy. A key
  challenge with random number generators is that it is hard to ensure
  that their outputs are
  unpredictable\cite{acin:2016,pironio:2013,miller:2014}. For a random
  number generator based on a physical process, such as a noisy
  classical system or an elementary quantum measurement, a detailed
  model describing the underlying physics is required to assert
  unpredictability. Such a model must make a number of assumptions
  that may not be valid, thereby compromising the integrity of the
  device. However, it is possible to exploit the phenomenon of
  quantum nonlocality with a loophole-free Bell test to build a random
  number generator that can produce output that is unpredictable to
  any adversary limited only by general physical
  principles \cite{acin:2016,pironio:2013, miller:2014,colbeck:2011,
    pironio:2010, vazirani:2012, fehr:2013, chung:2014,
    nieto:2014,bancal:2014, thinh:2016}. With recent technological
  developments, it is now possible to carry out such a loophole-free
  Bell test \cite{hensen:2015,shalm:2015,giustina:2015}. Here we
  present certified randomness obtained from a photonic Bell
  experiment and extract 1024 random bits uniform to within
  $10^{-12}$. These random bits could not have
    been predicted within any physical theory that prohibits
  superluminal signaling and allows one to make independent
  measurement choices. To certify and quantify the randomness, we
  describe a new protocol that is optimized for apparatuses
  characterized by a low per-trial violation of Bell
  inequalities. We thus enlisted an experimental
  result that fundamentally challenges the notion of determinism to
  build a system that can increase trust in random sources. In the
  future, random number generators based on loophole-free Bell tests
  may play a role in increasing the security and trust of our
  cryptographic systems and infrastructure.}

%First paragraph currently 280 words.

 The search for certifiably unpredictable random number generators is
motivated by applications, such as secure communication, for which the
predictability of pseudorandom strings make them unsuitable. Private randomness is required to initiate and authenticate virtually
every secure communication \cite{paar2010}, and public randomness from
randomness beacons can be used for public certification and resource
distribution in many settings \cite{fischer:2011}. To certify
randomness, one can perform an experiment known as a Bell
test \cite{BELL}, which in its simplest form performs measurements on
an entangled system located in two physically separated measurement
stations, with each station choosing between two types of
measurements. After multiple experimental trials with varying
measurement choices, if the measurement data violates conditions known
as ``Bell inequalities,'' then the data can be certified to contain
randomness under weak assumptions. 

Our randomness generation employs a
``loophole-free'' Bell test, which notably is characterized by high
detection efficiency and space-like separation of the measurement
stations during each experimental trial.  The bits are unpredictable
assuming that (1) the choices of measurement settings are independent
of the experimental devices and pre-existing classical information
about them and (2) in each experimental trial, the measurement
outcomes at each station are independent of the settings choices at
the other station. The first assumption is ultimately untestable, but
the premise that it is possible to choose measurement settings
independently of a system being measured is often tacitly invoked in
the interpretation of many scientific experiments and laws of physics
\cite{bell:1985}.  The second assumption can only be violated if one
admits a theory that permits sending signals faster than the speed of
light, given our trust that the space-like separation of the relevant
events in the experiment is accurately verified by the timing
electronics and that results are final when recorded. We also trust
that the classical computing equipment used to process the data
operates according to specification.  

Under the above assumptions, the
output randomness is certified to be unpredictable with respect to a
real or hypothetical actor ``Eve'' in possession of the pre-existing
classical information, physically isolated from the devices while they
are under our control, and without access to data produced during
  the protocol. The bits remain unpredictable to Eve if she learns
the settings at any time after her last interaction with the devices.
If the devices are trusted, which is reasonable if we built them, then
this may be well before the start of the protocol, in which case the
settings can come from public randomness \cite{pironio:2013,bancal:2014}. In particular, one can use
  an existing public randomness source, such as the NIST random beacon \cite{fischer:2011}, to generate much needed private randomness as output. Since the assumptions do not constrain the
specific physical realization of the devices and do not require
specific states or measurements, they implement a
``device-independent'' framework \cite{colbeck:2007} which allows
  an individual user to assure security with minimal assumptions about
  the devices. If Eve has quantum memory, it
  is possible to ensure that Eve's side information is effectively
  classical by verifying that the devices have no long-term quantum
  memory of past interactions with Eve. While this introduces weak
  device-dependence, for the foreseeable future this verification task
  is comparable to that required to enforce the absence of
  communication from the devices to Eve.

The only previous experimental production of certified randomness from
Bell test data was reported in the ground-breaking paper by Pironio et
al. \cite{pironio:2010}.  Their Bell test was implemented with ions in
two separate ion-traps, closing the detection loophole
\cite{pearle:1970} but without space-like separation. Indeed, Bell
tests achieving space-like separation without other experimental
loopholes have been performed only recently \cite{hensen:2015,
  shalm:2015, giustina:2015,rosenfeld:2017}. Under more restrictive
assumptions than ours, the maximum amount of randomness in principle
available in the data of Pironio et al. was quantified as $42$ bits
with an error parameter of $0.01$, but they did not extract a
uniformly distributed bit string from their data. Pironio et
al. argue that any interaction between measurement stations in their
experiment is negligible, because they are located in separate
ion-traps, each in its own vacuum chamber. However, any shielding
between the stations is necessarily incomplete; for example they must
have an open quantum channel to establish entanglement. Mundane
physical effects can allow local-realistic systems to appear to
violate Bell inequalities when shielding is incomplete.  Relying
instead on the impossibility of faster-than-light communication
provides stronger assurance of the unpredictability of the randomness.
  
We generated randomness using an improved version of the
  loophole-free Bell test reported in Ref.~[13]. Five new
  data sets were collected, with the best-performing data set yielding
  1024 new random bits uniform to within $10^{-12}$. We also
    obtained 256 random bits from the main data set analyzed in
    Ref.~[13], albeit only uniform to within $0.02$. The experiment, illustrated in
  Fig. \ref{expschematic}, consisted of a source of entangled
photons and two measurement stations named ``Alice'' and ``Bob''.
During an experimental trial, at each station a random choice was made
between two measurement settings labeled 0 and 1, after which a
measurement outcome of detection (+) or nondetection (0) was
recorded. Each station's implementation of the measurement setting was
space-like separated from the other station's measurement event, and
no postselection was employed in collecting the data. See the Methods
section for details.  For trial $i$, we model Alice's settings choices
with the random variable $X_i$ and Bob's with $Y_i$, both of which
take values in the set $\{0,1\}$.  Alice's and Bob's measurement
outcome random variables are respectively $A_i$ and $B_i$, both of
which take values in the set $\{\text{+},0\}$. When referring to a
generic single trial, we omit indices. With this notation, a general
Bell inequality for our scenario can be expressed in the form
\cite{BBP}
\begin{equation}\label{e:genbellineq}
\sum_{abxy}s^{ab}_{xy}\mathbb{P}(A=a,B=b|X=x,Y=y) \le \beta,
\end{equation}
where the $s^{ab}_{xy}$ are fixed real coefficients indexed by
$a,b,x,y$ that range over all possible values of $A,B,X,Y$.  The upper
bound $\beta$ is required to be satisfied whenever the
settings-conditional outcome probabilities are induced by a model
satisfying ``local realism'' (LR). LR distributions, which cannot be
certified to contain randomness, are those for which
$\mathbb{P}(A=a,B=b|X=x,Y=y)$ is of
the form $\sum_\lambda \mathbb{P}(A=a|X=x,
\Lambda=\lambda)\mathbb{P}(B=b|Y=y,
\Lambda=\lambda)\mathbb{P}(\Lambda=\lambda)$ for a random variable
$\Lambda$ representing local hidden variables. The Bell inequality is
non-trivial if there exists a quantum-realizable distribution that can
violate the bound $\beta$.

It has long been known that experimental violations of Bell inequalities such as \eqref{e:genbellineq} indicate the presence
of randomness in the data. To quantify randomness
with respect to Eve, we represent Eve's initial classical information
by a random variable $E$.  We formalize the assumption that
measurement settings can be generated independently of the system
being measured and Eve's information with the following condition:
\begin{equation}\label{e:mtunifsettings}
\mathbb{P}(X_i=x,Y_i=y|E=e,\text{past}_i)=\mathbb{P}(X_i=x,Y_i=y)=\frac{1}{4} \quad \forall x,y,e,
\end{equation}
where $\text{past}_i$ represents events in the past of the $i$'th
trial, specifically including the trial settings and outcomes for
trial 1 through $i-1$. Our other assumption, that measurement outcomes
are independent of remote measurement choices, is formalized as
follows:
\begin{eqnarray}\label{e:mtnosig}
      \mathbb{P}(A_i=a|X_i=x,Y_i=y,E=e, \text{past}_i) &=& \mathbb{P}(A_i=a|X_i=x,E=e, \text{past}_i)\notag\\
      \mathbb{P}(B_i=b|X_i=x,Y_i=y,E=e, \text{past}_i) &=& \mathbb{P}(B_i=b|Y_i=y,E=e, \text{past}_i)\quad \forall x,y,e.
\end{eqnarray}
These equations are commonly referred to as the ``non-signaling''
assumptions, although they are often stated without the conditionals
$E$ and $\text{past}_i$.  Our space-like separation of settings and
remote measurements provide assurance that the experiment obeys
Eqs.~\ref{e:mtnosig}. We remark that if one assumes the measured systems obey
  quantum physics, stronger constraints are possible \cite{TBOUND,navascues:2008}.

Given Eqs.~\ref{e:mtunifsettings} and \ref{e:mtnosig}, our protocol
produces random bits in two sequential parts. For the first part,
``entropy production'', we implement $n$ trials of the Bell test, from
which we compute a statistic $V$ related to a Bell inequality
(\eqref{e:genbellineq}). $V$ quantifies the Bell violation and
determines whether or not the protocol passes or aborts.  If the
protocol passes, we certify an amount of randomness in the outcome
string even conditioned on the setting string and $E$. In the second part,
``extraction,'' we process the outcome string into a shorter string of
bits whose distribution is close to uniform. We used our customized
implementation of the Trevisan extractor \cite{trevisan:2001} derived
from the framework of Mauerer, Portmann and Scholz \cite{mauerer:2012}
and the associated open source code. We call this the TMPS algorithm,
see Supplementary Information (SI) \ref{st:trevisan} for details.

We applied a new method of certifying the amount of
  randomness in Bell tests.  Previous methods for related models with various sets of assumptions \cite{pironio:2013,miller:2014,colbeck:2011,pironio:2010,vazirani:2012,fehr:2013,chung:2014,coudron:2014,dupuis:2016,arnon:2016}
are ineffective in our experimental regime (SI \ref{st:previous}),
which is characterized by a small per-trial violation of Bell
inequalities. Other recent
  works that explore how to effectively certify randomness from a
wider range of experimental regimes assume that measured states are
independent and identically distributed (i.i.d.)  or that the regime
is asymptotic \cite{nieto:2014,bancal:2014,thinh:2016, miller2:2014}.
Our method, which does not require these assumptions, builds on the
Prediction-Based Ratio (PBR) method for rejecting LR
\cite{zhang:2011}.  Applying this method to training data (see below),
we obtain a real-valued ``Bell function'' $T$ with arguments $A,B,X,Y$
that satisfies $T(A,B,X,Y) > 0$ with expectation $\mathbb{E}(T) \leq
1$ for any LR distribution satisfying Eq.~\ref{e:mtunifsettings}. From
$T$ we determine the maximum value $1+m$ of $\mathbb{E}(T)$ over all
distributions satisfying Eqs.~\ref{e:mtunifsettings}
and~\ref{e:mtnosig}, where we require that $m>0$.  Such a function $T$
induces a Bell inequality (\eqref{e:genbellineq}) with $\beta=4$ and
$s^{ab}_{xy}= T(a,b,x,y)$.  Define $T_i=T(A_i,B_i,X_i,Y_i)$ and
$V=\prod_{i=1}^nT_i $. If the experimenter observes a value of $V$
larger than $1$, this indicates a violation of the Bell inequality and
the presence of randomness in the data.  The randomness is quantified
by the following theorem, proven in the SI \ref{st:ept}. Below, we
denote all of the settings of both stations with ${\bf XY} =
X_1Y_1X_2Y_2...X_nY_n$, and other sequences such as $\mathbf{AB}$ and
$\mathbf{ABXY}$ are similarly interleaved over $n$ trials.

\medskip
\medskip

\noindent{\it Entropy Production Theorem.}  Suppose $T$ is a Bell function
satisfying the above conditions. Then in an experiment of $n$ trials
obeying Eqs.~\ref{e:mtunifsettings} and~\ref{e:mtnosig}, the following
inequality holds for all $\epsilon_{\text{p}} \in (0,1)$ and
$\Vthresh$ satisfying $1\le \Vthresh \le (1+(3/2)m)^{n}\epsilon_{\mathrm{p}}^{-1}$:
\begin{equation}\label{e:theorem}
\mathbb{P}_e\left(\mathbb{P}_e({\bf AB}|{\bf XY})> \delta \text{ AND } V\ge \Vthresh \right)\le \epsilon_{p}
\end{equation}
where $\delta = [1+(1-\sqrt[n]{\epsilon_{\text{p}}\Vthresh})/(2m)]^n$
and $\mathbb{P}_e$ denotes the probability distribution conditioned on the
event $\{E=e\}$, where $e$ is arbitrary. The expression
$\mathbb{P}_e({\bf AB}|{\bf XY})$ denotes the random variable that takes the
value $\mathbb{P}_{e}(\mathbf{AB}=\mathbf{ab}|\mathbf{XY}=\mathbf{xy})$ when
$\mathbf{ABXY}$ takes the value $\mathbf{abxy}$.

\medskip
\medskip

\noindent In words, the theorem says that with high probability, if
$V$ is at least as large as $\Vthresh$, then the output $\mathbf{AB}$
is unpredictable, in the sense that no individual outcome
$\{\mathbf{AB}=\mathbf{ab}\}$ occurs with probability higher than
$\delta$, even given the information
$\{\mathbf{XY}E=\mathbf{xy}e\}$. The theorem supports a protocol that
aborts if $V$ takes a value less
than $\Vthresh$, and passes otherwise. If the probability of passing
were 1, then $-\log_2(\delta)$ would be a so-called ``smooth
min-entropy'', a quantity that characterizes the number of uniform
bits of randomness that are in principle available in $\mathbf{AB}$
\cite{trevisan:2000, renner:2006}. We show in the SI~\ref{st:T} that
for constant $\epsilon_{\text{p}}$, $-\log_{2}(\delta)$ is
proportional to the number of trials. How many bits we can actually
extract depends on $\epsilon_{\text{fin}}$, the final output's maximum
allowed distance from uniform. We also show in the SI that the Entropy
Production Theorem can still be proved if \eqref{e:mtunifsettings} is
weakened so that settings probabilities need not be known but are
constrained to be within $\alpha$ of $1/4$ with $\alpha<1/4$, while
still being conditionally independent of earlier outcomes given
earlier settings. Such a weakening is relevant for experiments
\cite{hensen:2015,shalm:2015,giustina:2015} that use physical random
number generators to choose the settings, for which the settings
probabilities cannot be known exactly.

To extract the available randomness in ${\bf AB}$, we use the TMPS
algorithm to obtain an extractor, specifically a function $\text{Ext}$
that takes as input the string ${\bf AB}$ and a length $d$ ``seed''
bit string ${\bf S}$, where ${\bf S}$ is uniform and independent of
${\bf ABXY}$. Its output is a length $t$ bit string. ${\bf
  S}$ can be obtained from $d$ additional instances of the random
variables $X_i$, so \eqref{e:mtunifsettings} ensures the needed
independence and uniformity conditions on ${\bf S}$.  In order for the
output to be within a distance $\epsilon_{\text{fin}}$ of uniform
independent of ${\bf XY}$ and $E$, the entropy production and
extractor parameters must satisfy the constraints given in the next
theorem, proven in the SI \ref{st:pst}. In the statement of
the theorem, the measure of distance used
is the ``total variation (TV) distance,'' expressed by the left side of \eqref{e:mtfinal}, and ``$\mathrm{pass}$'' is the event that $V$ exceeds $\Vthresh$.

\medskip\medskip

\noindent{\it Protocol Soundness Theorem.}
Let $0<\epsilon_{\text{ext}}, \kappa<1$. Suppose that $\mathbb P(\text{pass})\ge\kappa$ and suppose that that the protocol parameters satisfy 
\begin{equation}\label{e:mttrev1}
t+4\log_2t \le -\log_2 \delta + \log_2 \kappa +5\log_2 \epsilon_{\text{ext}} -11.
\end{equation}
Then the output ${\bf U} = \text{Ext}({\bf
    AB},{\bf S})$ of the function obtained by the TMPS algorithm satisfies
\begin{multline}\label{e:mtfinal}
  \frac{1}{2}\sum_{{\bf u},{\bf xys}e} \Big|\mathbb{P}\big({\bf U}={\bf u},{\bf XYS}E={\bf
    xys}e|\text{pass}\big)-\mathbb{P}^{\text{unif}}({\bf U}={\bf u})\mathbb{P}\big({\bf
    XY}E={\bf xy}e|\text{pass}\big)\mathbb{P}^{\text{unif}}({\bf S}={\bf s})\Big|\\ \le
  \epsilon_{\mathrm{p}}/\mathbb P(\mathrm{pass})+\epsilon_{\mathrm{ext}},
\end{multline}
  where $\mathbb{P}^{\mathrm{unif}}$ denotes the uniform probability
  distribution.

\medskip
\medskip

\noindent
The number of seed bits $d$ required satisfies $d
=O(\log(t)\log(nt/\epsilon_{\mathrm{ext}})^{2})$, and SI
  \ref{st:trevisan} gives an explicit bound.

The theorem provides several options for quantifying the
  uniformity of the randomness produced. A goal is for the protocol to
  be nearly indistinguishable according to TV distance from an ideal
  protocol, where in an ideal protocol the randomness is perfectly
  uniform conditional on passing. For this, the ideal protocol can be
  chosen to have the same probability of passing with behavior matching that of the real protocol when aborting. The theorem implies that the unconditional distribution of the protocol is within TV distance
  $\max(\epsilon_{\mathrm{p}}+\epsilon_{\mathrm{ext}},\kappa)$ of that of an
  ideal protocol (SI \ref{st:pst}).  For this distance, if the probability of passing is
  comparable to $\kappa$, then the conditional TV distance from
  uniform, given in \eqref{e:mtfinal}, could be large. It is desirable that even for the worst case
  probability of passing, the conditional TV distance be small.
  Accordingly, we quantify the uniformity for our implementation with
  $\epsilon_{\mathrm{fin}}=\max(\epsilon_{\mathrm{p}}/\kappa +
  \epsilon_{\mathrm{ext}},\kappa)$.  Then, for any probability of
  passing greater than $\epsilon_{\mathrm{fin}}$, conditionally on
  passing, the TV distance from uniform is at most
  $\epsilon_{\mathrm{fin}}$. 

We applied our protocol to five data sets using the setup based on that described in Ref.~[13]
with improvements described in the Methods section. Each data set was
collected in five to ten minutes, improving on the approximately
one month duration of data acquisition reported in Ref.~[5]. Before starting
the protocol, we set aside the first $5\times 10^{6}$ trials of each
data set as training data, which we used to choose parameters needed
by the protocol.  With the training data removed, the number $n$ of
trials used by the protocol was between $2.5\times 10^7$ and $5.5 \times 10^7$
  for each data set. We used the training data to determine a Bell
function $T$ with statistically strong violation of LR on the training
data according to the PBR method \cite{zhang:2011}; see SI
\ref{st:T}. The function $T$ obtained for the fifth data set, which was longest in duration and produced the most randomness, is given in Table
\ref{t:PBR} as an example. We computed thresholds $\Vthresh$ so that a
sample of $n$ i.i.d. trials from the distribution inferred from the
training data would have a high probability for exceeding $\Vthresh$.

\begin{table}
  \centering\caption{{\bf Bell function $T$ obtained from Data Set 5.} We used a numerical method based on maximum likelihood
    to infer a non-signaling
    distribution based on the raw counts of the training trials,
    namely the first $5\times 10^{6}$ trials. We then
    determined the function $T$ that maximizes $\mathbb{E}(\ln T)$
    according to this distribution, subject to the constraints that
    $\mathbb{E}(T)_{LR}\le 1$ for all LR distributions and
    $T(0,0,x,y)=1$ for all $x,y$. The latter constraint
      improves the signal-to-noise for our data. The function $T$
    yields $m=0.0100425$, and $\mathbb{E}(T) = 1.000003931$ for the
    non-signaling distribution inferred from the training data. One
    can also interpret the numbers below as the coefficients
    $s^{ab}_{xy}$ in \eqref{e:genbellineq}, which defines a Bell
    inequality with $\beta=4$. The values of $T$ are rounded down at
    the tenth digit.}
\label{t:PBR}
 \begin{tabular}{ r|c|c|c|c| }
 \multicolumn{1}{r}{}
  &  \multicolumn{1}{c}{$ab=\text{++}$}
 &  \multicolumn{1}{c}{$ab=\text{+}0$}
 &  \multicolumn{1}{c}{$ab=0\text{+}$} 
   &  \multicolumn{1}{c}{$ab=00$}
\\
  \cline{2-5}
   $xy=00$&   1.0243556353  &  0.9704647804  &  0.9735507658  & 1 \\
 \cline{2-5}
   $xy=01$&   1.0256127409  &  0.9491951243  &  0.9960775334  & 1 \\
 \cline{2-5}
   $xy=10$&   1.0227274988  &  0.9962782754  &  0.9461091383  & 1 \\
 \cline{2-5}
   $xy=11$&   0.9273040563  &  1.0037217225  &  1.0039224645  & 1 \\
 \cline{2-5}
 \end{tabular}
\end{table} 

For the fifth data set, a sample of $n$ i.i.d.\ trials from the
distribution inferred from the training data would have approximately
0.99 probability of exceeding a threshold of $\Vthresh
= 1.5\times 10^{32}$. This would allow the extraction of 1024 bits
uniform to within $\epsilon_{\text{fin}}=10^{-12}$, using
$\epsilon_{\mathrm{p}} =\kappa^2=9.025\times 10^{-25}$ and
$\epsilon_{\mathrm{ext}}=5 \times 10^{-14}$. These values were chosen
based on a numerical study of the constraints on the number $t$ of
bits extracted for fixed values of
$\epsilon_{\mathrm{fin}}=10^{-12}$. Running the protocol on the
remaining $55,110,210$ trials with these parameters, the product
$\prod_{i=1}^nT_i$ exceeded $\Vthresh$, and so the protocol passed.
Applying the extractor to the resulting output string ${\bf AB}$ with
a seed of length $d=315,844$, we extracted $1024$ bits, certified to
be uniform to within $10^{-12}$, the first ten of which are
1110001001. Figure \ref{f:FEB} displays the extractable bits for
alternative choices of $\epsilon_{\text{fin}}$ for all five data sets.

We also applied the protocol to data from the experiment of
Ref.~[13].  This experiment was more conservative in
taking additional measures to ensure that it was loophole-free,
including space-like separation of the measurement choices from both
the downconversion event and the remote measurement outcomes.  We
extracted $256$ bits at $\epsilon_{\mathrm{fin}}=0.02$ from the best
data set, XOR 3, reported in Ref.~[13]. The distance from
an ideal protocol as explained after the Protocol Soundness Theorem
was $4.00\times 10^{-4}$, without accounting for possible bias in the
random source used. For details see SI \ref{st:actual}.

For the data set producing 1024 new near random bits, our protocol
used $1.10 \times 10^8$ uniform bits to choose the settings and
$3.16\times 10^5$ uniform bits to choose the seed.  Because the
extractor used is a ``strong'' extractor, the seed bits are still
uniform conditional on passing, so they can be recovered at the end of
the protocol for uses elsewhere. This is not the case for the
settings-choice bits because the probability of passing is less than
$1$. To reduce the entropy used for the settings, our protocol can be
modified to use highly biased settings choices \cite{pironio:2010}. Reducing settings entropy is not a priority if the settings and seed bits come from a public source of randomness,
  in which case
  the output bits can still  be certified to be unknown to external
  observers such as Eve 
  and the current protocol is an
  effective method for private randomness generation~\cite{pironio:2013,bancal:2014}.

For future work, we hope to take advantage of the adaptive
capabilities of the Entropy Production Theorem (SI~\ref{st:ept}) to
dynamically compensate for experimental drift during run time. In view
of advances toward practical quantum computing it is desirable to
study the protocol in the presence of quantum side information, which
may require more conservative randomness generation. We also look
forward to technical improvements in experimental equipment for larger
violation and higher trial rates. These will enable faster generation
of random bits with lower error and support the use of biased settings
choices.

Existing randomness generation systems rely on detailed assumptions
about the specific physics underlying the devices. With the advent of
loophole-free Bell tests, it is now possible to build quantum devices
that exploit quantum nonlocality to remove many of the
device-dependent assumptions in current technological
implementations. Our device-independent random number generator is an
example of such a system. Such generators can provide the best method
currently known for physically producing randomness, thereby improving
the security of a wide range of applications.

\medskip

\medskip

\noindent{\bf Methods} We used
polarization-entangled photons generated by a nonlinear crystal pumped
by a pulsed, picosecond laser at approximately 775 nm in a
configuration similar to that reported in Ref.~[13], but with
several improvements to increase the rate of randomness
extraction. The laser's repetition rate was 79.3 MHz, and each pulse
that entered the crystal had a probability of $\approx 0.003$ of
creating an entangled photon pair in the state $\left|\psi
\right\rangle \approx 0.982 \left|HH \right\rangle + 0.191 \left|VV
\right\rangle$ at a center wavelength of 1550 nm. By pumping the
crystal with approximately five times as much power, and using a 20 mm
long crystal, we were able to substantially increase the per-pulse
probability of generating a downconversion event compared with
Ref.~[13] while maintaining similar overall system
efficiencies. The two entangled photons from each pair were
separately sent to one of the two
measurement stations ($187 \pm 1$) m apart. At Alice and Bob, a Pockels cell and polarizer
combined to allow the rapid switching of measurement bases and
measurement of the polarization state of the incoming photons. Each
Pockels cell operated at a rate of 100 kHz, allowing us to perform
100,000 trials per second (the driver electronics on the Pockels cells
sets this rate). The photons were then detected using fiber-coupled
superconducting single-photon nanowire detectors, with Bob's detector
operating at approximately $90 \%$ efficiency and Alice's detector
operating with approximately $92 \%$ efficiency
\cite{marsili:2013}. For this experiment, the total symmetric system
heralding efficiency was $(75.5 \pm 0.5 \%)$, which is above the $71.5
\%$ threshold required to close the detection-loophole for our
experimental configuration after accounting for unwanted background
counts at our detectors and slight imperfections in our state
preparation and measurements components.

With this configuration, Bob completed his measurement
$(294.4 \pm 3.7)$ ns before a hypothetical switching signal travelling
at light speed from Alice's Pockels cell could arrive at his
station. Similarly, Alice completed her measurement $(424.2 \pm 3.7)$
ns before such a signal from Bob's Pockels cell could arrive at her
location. Each trial's outcome values were obtained by aggregating the
photon detection or non-detection events from several short time
intervals lasting 1024 ps, each of which is timed to correspond to one
pulse of the pump laser.  If any photons were detected in the short
intervals, the outcome is ``+'', and if no photons were detected, the
outcome is ``0''.  The experiment of Ref.~[13] used at most 7
short intervals, but here we were able to include 14 intervals while
maintaining space-like separation, which further increased the
probability of observing a photon during each trial. For demonstration
purposes, Alice and Bob each used Python's \texttt{random.py} module
with the default generator (the Mersenne twister) to pick their
settings at each trial. This pseudorandom source is predictable,
  and for secure applications of the protocol in an adversarial
  scenario, such as if the photon
    pair source or measurement devices are obtained from
    an untrusted provider, settings choices must be based on random
  sources that are effectively not predictable. However, based on our
  knowledge of device construction, we know that our devices have
    no physical resources for predicting pseudo-random numbers and
  expect that measurement settings were
  effectively independent of relevant devices so that
  Eqs.~\ref{e:mtunifsettings} and~\ref{e:mtnosig} still hold. We remark that the
settings choices for the XOR 3 data set were based on physical random
sources.

With the improved detection efficiency, the higher per-trial
probability of for Alice and Bob to detect a photon, and a higher
signal-to-background counts ratio we are able to improve both the
magnitude of our Bell violation as well as reduce the number of trials
required to achieve a statistically significant violation by an order of magnitude.

{\nolinenumbers
\begin{singlespace}

%\medskip

%\medskip

%\noindent{\bf Supplementary Information} is available in the online version of the paper.

%\medskip

%\medskip

\noindent{\bf Acknowledgments}
We thank Carl Miller and Kevin Coakley for comments on the manuscript. A.M. acknowledges financial support through NIST grant 70NANB16H207.

\medskip

\medskip

\noindent{\bf Author Contributions}
P.B. led the project and implemented the protocol. P.B., E.K., S.G. and Y.Z. developed the protocol theory. A.M., S.J., A.R. and Y.-K. L. were responsible for extractor theory and implementation. B.C., S.W.N., M.J.S. and L.K.S. collected and interpreted the data. PB., E.K., S.G. and L.K.S. wrote the manuscript.

\medskip

\medskip

\noindent{\bf Author Information}
This work is a contribution of the National Institute of Standards and Technology and is not subject to U.S. copyright. The authors declare no competing financial interests. Correspondence and requests for materials should be addressed to P.B. (peter.bierhorst@nist.gov).

\end{singlespace}
}%this ends the {\nolinenumbers command, so line numbers start again here.

\begin{figure}
\begin{center}
\includegraphics[scale=.3]{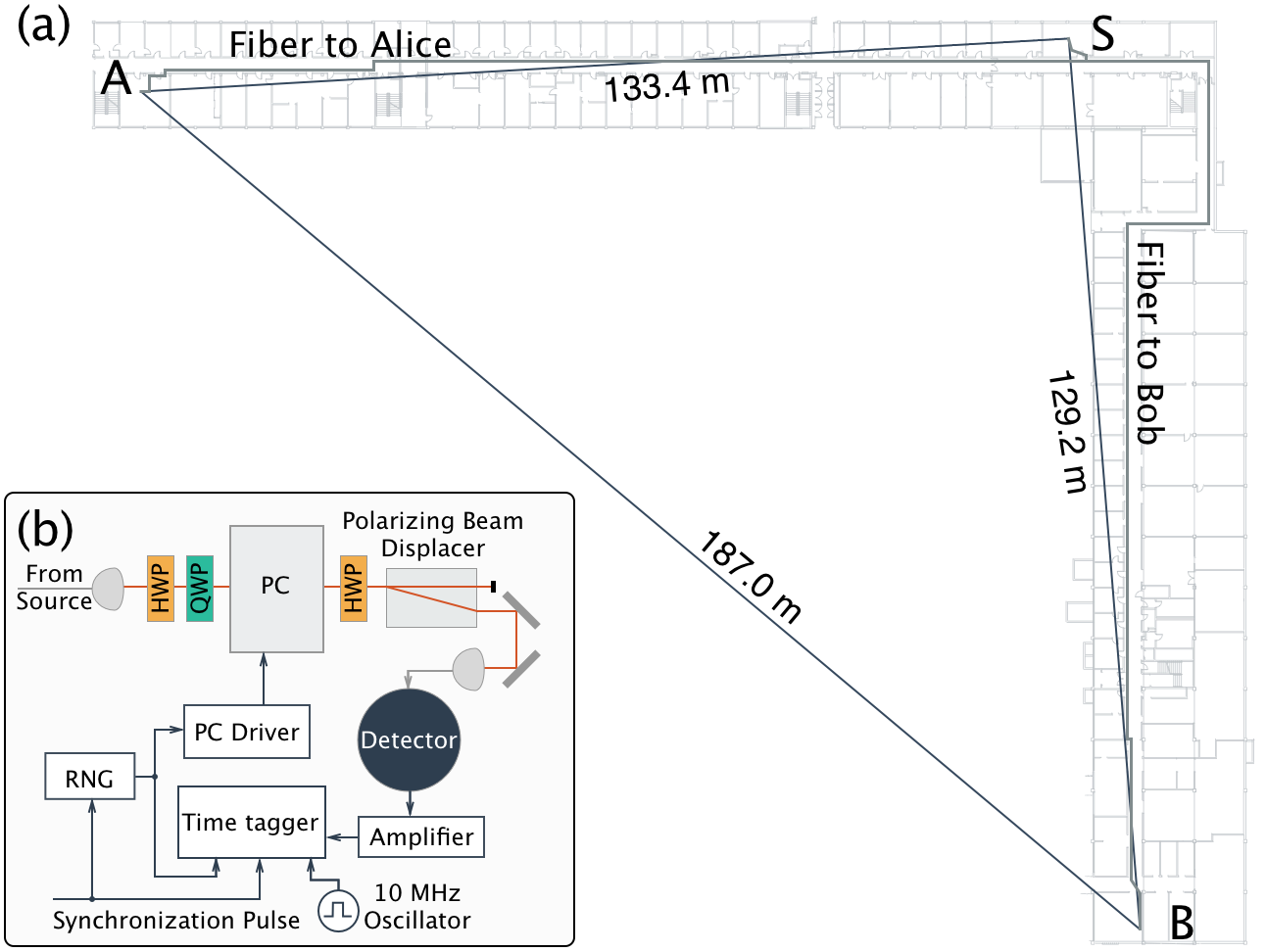}
\end{center}
\caption{{\bf The locations of the Source (S), Alice
    (A) and Bob (B)}. Each trial, the source lab
  produces a pair of photons in the non-maximally
  polarization-entangled state
  $\left|\psi \right\rangle \approx 0.982 \left|HH \right\rangle +
  0.191 \left|VV \right\rangle$, where $H$ ($V$) denotes horizontal
  (vertical) polarization. One photon is sent to Alice's lab while the
  other is sent to Bob's lab to be measured as shown in inset (b). Alice's computed optimal
  polarization measurement angles, relative to a vertical polarizer,
  are $\{a =-3.7^o, a' = 23.6^o\}$ while Bob's are
  $\{b = 3.7^o, b' = -23.6^o\}$. Both Alice and Bob use a fast Pockels cell (PC), two half-waveplates (HWP), a quarter-waveplates (QWP), and a polarizing beam displacer to
  switch between their respective polarization measurements. A
  pseudorandom number generator (RNG) governs the choice of each measurement
  setting every trial. After passing through the polarization optics, the photons are coupled into a single-mode fiber and sent to a superconducting nanowire detector. The signals from the detector are then amplified and sent to a time tagger where their arrival times are recorded and the measurement outcome is fixed. A 10 MHz oscillator keeps Alice and Bob's timetagger clocks locked. Alice and Bob are ($187 \pm 1$) m
  apart. At this distance, Alice's measurement outcome is space-like separated from the triggering of Bob's Pockels cell and vice-versa.}
\label{expschematic}
\end{figure}

\begin{figure}\centering
\includegraphics{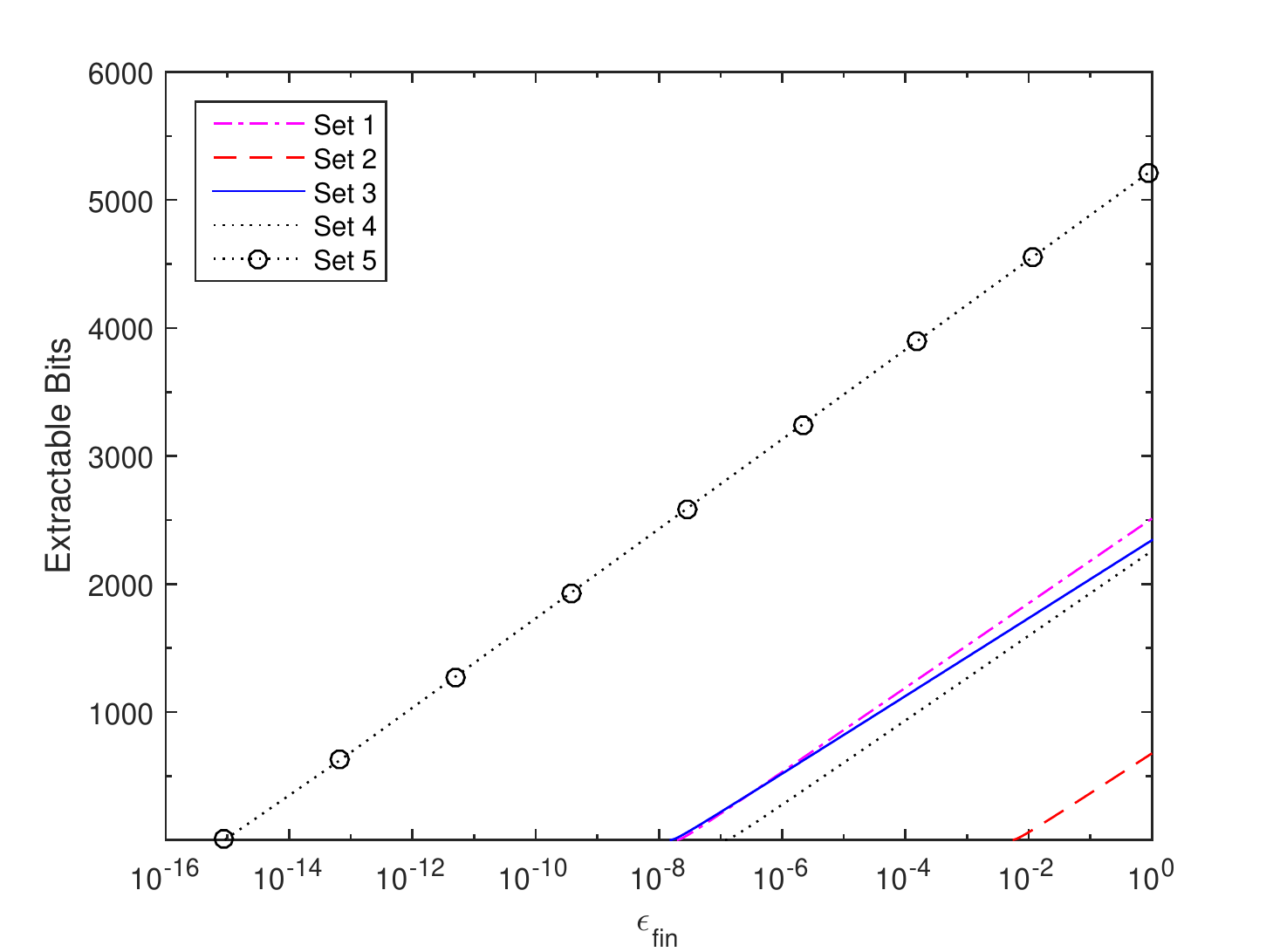}
\caption{{\bf Extractable bits as a function of error.} The figure shows the tradeoff between final error $\epsilon_{\text{fin}}$ and
  number of extractable bits $t$ for values of $\Vthresh$ pre-chosen to yield estimated passing probabilities exceeding 95\%. These thresholds were met in each case. For all data sets we set $\epsilon_{\text{p}}=\kappa^2=(0.95\,\epsilon_{\text{fin}})^2$ and $\epsilon_{\text{ext}}=0.05\,\epsilon_{\text{fin}}$, a split that was generally found to be near-optimal when numerically maximizing $t$ in \eqref{e:mttrev1} for fixed values of
$\epsilon_{\text{fin}}$.}
\label{f:FEB}
\end{figure}

\newpage

\resetlinenumber
\setcounter{page}{1}

%The following commands restart table and equation numbering and add an ``S'' prefix 
\renewcommand{\thetable}{S\arabic{table}}   
\renewcommand{\theequation}{S\arabic{equation}}
\setcounter{table}{0}
\setcounter{equation}{0}

\begin{center}
{\Large Experimentally Generated Randomness Certified}\\
{\Large by the Impossibility of Superluminal Signals}\\
{\large (Supplementary Information)}\\
Peter Bierhorst, Emanuel Knill,  Scott Glancy, Yanbao Zhang,\\
 Alan Mink, Stephen Jordan, Andrea Rommal, Yi-Kai Liu, \\
 Bradley Christensen, Sae Woo Nam, Martin J. Stevens, Lynden K. Shalm\\
\end{center}

\def\thesection{S} % Use \subsection for sectioning. Numbering is S.1 etc.

After preliminaries to establish notation and summarize needed
properties of total variation distance and non-signaling
  distributions in \ref{st:prelim}, we give the proof of the Entropy
  Production Theorem in \ref{st:ept}. We explain how we chose the
  Bell function $T$, whose product determines whether we obtained
  the desired amount of randomness, in \ref{st:T}. We then discuss the
  parameters of the extractors obtained by the TMPS algorithm
  (\ref{st:trevisan}) and prove the Protocol Soundness Theorem
  (\ref{st:pst}). Details on how we analyzed the experimental data
  sets are in \ref{st:actual}. Justification for our claim that
  previous methods do not obtain any randomness from our low per-trial-violation data is given in \ref{st:previous}.

\subsection{Preliminaries}
\label{st:prelim}

We use the standard convention that capital letters refer to random
variables (RVs) and corresponding lowercase letters refer to values
that the RVs can take. All our RVs take values in finite sets such as
the set of bit strings of a given length or a finite subset of the
reals, so that our RVs can be viewed as functions on a finite
probability space. We usually just work with the induced joint
distributions on the sets of values assumed by the RVs.  When working
with conditional probabilities, we implicitly exclude points where the
conditioner has zero probability whenever appropriate.  We use
$\mathbb{P}(\ldots)$ to denote probabilities and $\mathbb{E}(\ldots)$
for expectations.  Inside $\mathbb{P}(\ldots)$ and when used as
conditioners, logical statements involving RVs are event
specifications to be interpreted as the event for which the statement
is true.  For example, $\mathbb{P}(R>\delta)$ is equivalent to
$\mathbb{P}(\{\omega:R(\omega)>\delta\})$, which is the probability of
the event that the RV $R$ takes a value greater than $\delta$. The
same convention applies when denoting events with $\{\ldots\}$. For
example, the event in the previous example is written as
$\{R>\delta\}$. While formally events are sets, we commonly use
logical language to describe relationships between events. For
example, the statement that $\{R>\delta\}$ implies $\{S>\epsilon\}$
means that as a set, $\{R>\delta\}$ is contained in
$\{S>\epsilon\}$. When they appear outside the the
mentioned contexts, logical statements are constraints on RVs. For
example, the statement $R>\delta$ means that all values $r$ of $R$
satisfy $r>\delta$, or equivalently, for all $\omega$,
$R(\omega)>\delta$.  As usual, comma separated statements are combined
conjunctively (with ``and'').  (In the main text, for clarity, we have
used an explicit ``AND'' for this purpose.)

If there are free RVs inside $\mathbb{P}(\ldots)$ or in the
conditioner of $\mathbb{E}(\ldots|\ldots)$ outside event specifications, the
final expression defines a new RV as a function of the free RVs.  An
example from the Entropy Production Theorem is the expression
$\mathbb{P}({\bf AB}|{\bf XY})$, which defines the RV that takes the value
$\mathbb{P}({\bf AB}={\bf ab}|{\bf XY}={\bf xy})$ when the event
$\{{\bf ABXY}={\bf abxy}\}$ occurs.  Values of RVs such as ${\bf x}$
appearing by themselves in $\mathbb{P}(\ldots)$ denote the event
$\{{\bf X}={\bf x}\}$. Thus we abbreviate expressions such as
$\mathbb{P}({\bf AB}={\bf ab}|{\bf XY}={\bf xy})$ by $\mathbb{P}({\bf ab}| {\bf xy})$.
Sometimes it is necessary to disambiguate the probability distribution
with respect to which $\mathbb{E}(\ldots)$ is to be computed.  In such cases we
use a subscript at the end of the expression consisting of a symbol
for the probability distribution, so $\mathbb{E}(T)_\mathbb{Q}$ is the expectation of
$T$ with respect to the distribution $\mathbb{Q}$. In a few instances, we use
$\llbracket\phi\rrbracket$ for logical expressions $\phi$ to denote
the $\{0,1\}$-valued function evaluating to $1$ iff $\phi$ is true.

The amount of randomness that can be extracted from an RV $R$ is
  quantified by the \textit{min-entropy}, defined as $-\log_2 \max_r
  \mathbb{P}(R=r)$. The error of the output of an extractor is given as the
  \textit{total variation} (TV) distance from uniform. Given two
  probability distributions $\mathbb{P}_1$ and $\mathbb{P}_{2}$ for $R$, the TV
  distance between them is given by
\begin{eqnarray}\label{e:condTV}
  \text{TV}(\mathbb{P}_1,\mathbb{P}_2)&=&\frac{1}{2} \sum_r \left| \mathbb{P}_1(R=r)-\mathbb{P}_2(R=r)\right|\notag\\
  &=& \sum_{r:\mathbb{P}_{1}(r)>\mathbb{P}_{2}(r)}\left( \mathbb{P}_{1}(R=r)-\mathbb{P}_{2}(R=r)\right)\notag\\
  &=& \sum_{r}\llbracket \mathbb{P}_{1}(r)>\mathbb{P}_{2}(r)\rrbracket\left( \mathbb{P}_{1}(R=r)-\mathbb{P}_{2}(R=r)\right).
\end{eqnarray}
As the name implies, the TV distance is a metric. In particular, it
satisfies the triangle inequality:
\begin{equation}\label{e:TIforTV}
\text{TV}(\mathbb{P}_{1},\mathbb{P}_{3}) \leq \text{TV}(\mathbb{P}_{1},\mathbb{P}_{2})+\text{TV}(\mathbb{P}_{2},\mathbb{P}_{3}).
\end{equation}
See Ref.~\cite{levin:2009} for this and other basic properties of TV
distances.

We sometimes compute TV distances for distributions of specific RVs,
conditional or unconditional ones. For this we introduce the notation
$\mathbb{P}_X$ for the distribution of values of $X$ according to $\mathbb{P}$, and
$\mathbb{P}_{X|Y=y}$ for the distribution of $X$ conditioned on the event
$\{Y=y\}$. With this notation, $\mathbb{P}_X\mathbb{P}_Y$ refers to the product distribution that assigns probability $\mathbb{P}_X(X=x)\mathbb{P}_Y(Y=y)$ to the event $\{X=x,Y=y\}$.

For the proof of the Protocol Soundness Theorem, we need two results
involving the TV distance. According to the first result, if $\mathbb{P}$ and $\mathbb{Q}$ are
joint distributions of RVs $V$ and $W$, where the marginals of $W$
satisfy $\mathbb{P}(w)=\mathbb{Q}(w)$, then the distance between them
is given by the average conditional distance. This is explicitly
calculated as follows:
\begin{eqnarray}\label{e:TVsameconditionals}
\text{TV}(\mathbb{P}_{VW},\mathbb{Q}_{VW}) &=&
  \sum_{w}\sum_{v}\llbracket \mathbb{P}(v,w)>\mathbb{Q}(v,w)\rrbracket
    \left(\mathbb{P}(v,w)-\mathbb{Q}(v,w)\right)\notag\\
  &=&
  \sum_{w}\sum_{v}\llbracket \mathbb{P}(v|w)\mathbb{P}(w)>\mathbb{Q}(v|w)\mathbb{Q}(w)\rrbracket
    \left(\mathbb{P}(v|w)\mathbb{P}(w)-\mathbb{Q}(v|w)\mathbb{Q}(w)\right)\notag\\
  &=& 
  \sum_{w}\sum_{v}\llbracket \mathbb{P}(v|w)>\mathbb{Q}(v|w)\rrbracket
    \left(\mathbb{P}(v|w)-\mathbb{Q}(v|w)\right)\mathbb{P}(w)\notag\\
  &=& \sum_{w}\text{TV}(\mathbb{P}_{V|W=w},\mathbb{Q}_{V|W=w})\mathbb{P}(w).
\end{eqnarray}

The second result is a special case of the data-processing inequality
for TV distance. See Ref.~\cite{pardo:1997} for this and many other
data-processing inequalities.  Let $V$ be a random variable taking
values in a finite set $\mathcal V$, and let
$F:\mathcal V \to \mathcal W$ be a function so that $F(V)$ is a random
variable taking values in the set $\mathcal W$. Then if $\mathbb{P}$
and $\mathbb{Q}$ are two distributions of $V$,
\begin{equation}\label{e:classicalprocessing}
\text{TV}\big(\mathbb{P}_V,\mathbb{Q}_V\big) \ge \text{TV}\big(\mathbb{P}_{F(V)}, \mathbb{Q}_{F(V)}\big). 
\end{equation}
Here is a proof of this inequality. Write
$\mathcal W=\{s_1,...,s_c\}$, and for each $i\in\{1,\ldots ,c\}$,
define $\mathcal V_i=\{v:f(v)=s_i\}$. The $\mathcal{V}_{i}$ form a
partition of $\mathcal V$. Then we have
\begin{eqnarray}
  \text{TV}\big(\mathbb{P}_{F(V)}, \mathbb{Q}_{F(V)}\big)&=& \frac{1}{2} \sum_{i=1}^c \left|\mathbb{P}(V\in \mathcal V_i)-\mathbb{Q}(V \in \mathcal V_i)\right|\notag\\
                                       &=& \frac{1}{2} \sum_{i=1}^c \left|\sum_{v\in \mathcal V_i}\left[\mathbb{P}(V=v)-\mathbb{Q}(V=v)\right]\right|\notag\\
                                       &\le& \frac{1}{2} \sum_{i=1}^c \sum_{v\in \mathcal V_i}\left|\mathbb{P}(V=v)-\mathbb{Q}(V=v)\right|\notag\\ 
                                       &=& \text{TV}\big(\mathbb{P}_V,\mathbb{Q}_V\big).
\end{eqnarray} 

\begin{sloppypar}
  We need to refer to the sequences of
  RVs associated with the first $i-1$ trials. To do this we use
  notation such as $({\bf AB})_{<i}$ for the outcome sequence
  $A_1B_1A_2B_2...A_{i-1}B_{i-1}$, $({\bf XY})_{<i}$ for the settings
  sequence $X_1Y_1...X_{i-1}Y_{i-1}$, and $({\bf ABXY})_{<i}$ for the
  joint outcomes and settings sequence
  $A_1B_1X_1Y_1...A_{i-1}B_{i-1}X_{i-1}Y_{i-1}$.  In general we often
  juxtapose RVs to indicate the ``joint'' RV.  From our assumption
  Eqs.~\ref{e:mtunifsettings} and~\ref{e:mtnosig} and the fact that
  $\text{past}_{i}$ subsumes the trial settings and outcomes from
  trials $1$ through $i-1$, we obtain
\begin{equation}\label{e:indeppast}
  \forall i \in (1,...,n), \quad \mathbb{P}_e\left(X_iY_i|({\bf ABXY})_{<i}\right) = \mathbb{P}_e(X_iY_i) = 1/4,
\end{equation}
and
\begin{eqnarray}\label{e:nosig}
\mathbb{P}_e(A_i|X_iY_i, ({\bf ABXY})_{<i})&=&\mathbb{P}_e(A_i|X_i, ({\bf ABXY})_{<i})\notag \\ 
\mathbb{P}_e(B_i |X_iY_i, ({\bf ABXY})_{<i})&=&\mathbb{P}_e(B_i |Y_i, ({\bf ABXY})_{<i}).
\end{eqnarray}

Eq.~\ref{e:indeppast} can be weakened to accommodate imperfect
  settings randomness by replacing it with the following two
  assumptions, where $\alpha\in [0,1/4)$ is a parameter controlling deviation from uniformity: 
\begin{eqnarray}
  \forall i \in (1,...,n), \quad 1/4 - \alpha \le \mathbb{P}_e\left(X_iY_i|({\bf ABXY})_{<i}\right) \le 1/4 + \alpha\label{e:indeppastalpha}\\
P_e(X_iY_i|(ABXY)_{<i})=P_e(X_iY_i|(XY)_{<i}) \label{e:condindep}
\end{eqnarray}
Eq.~\ref{e:indeppast} is a strictly
  stronger assumption as it implies both Eq.~\ref{e:indeppastalpha}
  (with $\alpha=0$) and Eq.~\ref{e:condindep}. Eqs.~\ref{e:nosig},
  \ref{e:indeppastalpha}, and \ref{e:condindep} are the forms of our
  assumptions used in the proof of the Entropy Production Theorem. Eq.~\ref{e:condindep} expresses conditional independence of all past outcomes and the upcoming settings given the past settings.  It
  is a special case of the Markov-chain condition in
  Ref.~\cite{dupuis:2016}.
\end{sloppypar}

For a generic trial of a two station Bell test, a distribution is
defined to be non-signaling if
\begin{equation}\label{e:nosiggen}
\mathbb{P}(A|XY)=\mathbb{P}(A|X) \quad \text{and} \quad
\mathbb{P}(B |XY)=\mathbb{P}(B |Y).
\end{equation}
Such distributions form a convex polytope and include the
\textit{local realist} (LR) distributions.  Using the conventions
of~\cite{BBP}, these are defined as follows: Let $\lambda$ range over
the set of sixteen four-element vectors of the form
$(a_0,a_1,b_0,b_1)$ with elements in $\{\text{+},0\}$.  Each $\lambda$
induces settings-conditional deterministic
distributions according to
\begin{equation}\label{e:localdet}
\mathbb{P}^\lambda(ab|xy) = \begin{cases}
1, & \text{ if $a=a_x$ and $b=b_y$,}\\
0, & \text{ otherwise.}\\
\end{cases}
\end{equation}
Then a probability distribution $\mathbb{P}$ is LR iff its
conditional probabilities $\mathbb{P}(ab|xy)$ can be written as a convex
combination of the $\mathbb{P}^\lambda(ab|xy)$. That is
\begin{equation}\label{e:local}
\mathbb{P}(ab|xy)=\sum_\lambda q_\lambda \mathbb{P}^\lambda(ab|xy),
\end{equation}
with $q_{\lambda}$ a $\lambda$-indexed set of nonnegative numbers summing to 1. 
This definition agrees with the one given in the main text. 

The eight ``Popescu-Rohrlich (PR)
  boxes'' \cite{PRBOX} are examples of non-signaling distributions
  that are not LR. One of the PR boxes is defined by
\begin{equation}\label{e:PRbox}
\mathbb{P}_{\text{PR}}(ab|xy)=\begin{cases} 1/2 & \text{ if } xy\ne 11 \text{ and }a=b, \text{ or if } xy = 11 \text{ and }a\ne b,\\
0 & \text{ otherwise,}
\end{cases}
\end{equation}
and the other seven are obtained by relabeling settings or outcomes.
We take advantage of the facts that a PR box contains one bit of
randomness conditional on the settings and that the PR boxes
together with the $16$ deterministic LR distributions of \eqref{e:localdet} form the set of
extreme points of the non-signaling polytope~\cite{barrett:2005}.

\subsection{Proof of the Entropy Production Theorem} 
\label{st:ept}

The conditions on $T$ given in the main text are that (1) $T>0$, (2)
$\mathbb{E}(T)_\mathbb{P}\leq 1$ for every LR distribution
$\mathbb{P}$, (3) there exists an $m>0$ such that
$\mathbb{E}(T)_\mathbb{Q}\leq 1+m$ for every non-signaling
distribution $\mathbb{Q}$ if the settings distribution is uniform as
in \eqref{e:mtunifsettings}, and (4) the bound $1+m$ is
achievable. Our proof of the Entropy Production Theorem does not
require that the fourth condition is satisfied. Furthermore, we prove
the Entropy Production Theorem with a weakened form of the second and
third conditions, assuming that $T$ satisfies conditions (2) and (3)
with any settings distribution satisfying \eqref{e:indeppastalpha}. In
the following, we call this relaxed version of conditions (1)-(3)
``the Bell-function conditions with bound $m$ and settings parameter
$\alpha$''. We also generalize the Entropy Production Theorem by
allowing the $T_i$ to be chosen based on $({\bf abxy})_{<i}$. We call
$T_{i}$ a ``past-parametrized family of Bell functions'' if for all
$({\bf abxy})_{<i}$, $T_{i}(a_{i}b_{i}x_{i}y_{i},({\bf abxy})_{<i})$
satisfies the Bell-function conditions with bound $m$ and settings
parameter $\alpha$ when considered as a function of the results
$a_{i}b_{i}x_{i}y_{i}$ from the $i$'th trial. By proving the theorem
for past-parametrized Bell functions $T$, we allow for the possibility
of dynamically adapting $T$ during run time, a feature that could
compensate for experimental drift in future implementations of the
protocol. The theorem and its proof can also be directly applied to
the special case where $T_i$ is the same function for all trials $i$
and $\alpha=0$.

\begin{Theorem}\label{t:ept}
  Let $T_{i}$ be a past-parametrized family of Bell functions as
  defined in the previous paragraph.  Then in an experiment of $n$
  trials obeying \eqref{e:nosig}, \eqref{e:indeppastalpha} and \eqref{e:condindep}, the
  following inequality holds for all $\epsilon_{\mathrm{p}} \in (0,1)$
  and $\Vthresh$ satisfying
    $1\le \Vthresh \le (1+(3/2)m)^{n}\epsilon_{\mathrm{p}}^{-1}$:
  \begin{equation}
    \mathbb{P}_e\left(\mathbb{P}_e({\bf AB}|{\bf XY})> \delta , V\ge \Vthresh \right) \le\epsilon_{\mathrm{p}}
  \label{e:1}
  \end{equation}
  where $\delta =
  [1+(1-\sqrt[n]{\epsilon_{\mathrm{p}}\Vthresh})/2m]^n$ and
  $\mathbb{P}_e$ represents the probability distribution conditioned on the
  event $\{E=e\}$.
\end{Theorem}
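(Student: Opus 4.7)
The plan is to reduce the theorem to a single Markov inequality applied to the nonnegative martingale $M=V/\prod_{i=1}^{n}e_{i}$, where $e_{i}=\mathbb{E}_{e}\bigl(T_{i}\mid({\bf ABXY})_{<i}\bigr)$ is the conditional expectation of the $i$th Bell-function value given the past. Positivity of $T$ forces $e_{i}>0$, so each ratio $T_{i}/e_{i}$ has conditional mean $1$ and the partial products $\prod_{j\le i}T_{j}/e_{j}$ form a nonnegative martingale with $\mathbb{E}_{e}(M)=1$. Markov then gives $\mathbb{P}_{e}(M\ge K)\le 1/K$ for every $K>0$, so it will suffice to show that on the event $\{\mathbb{P}_{e}({\bf AB}\mid{\bf XY})>\delta\}$ the denominator $\prod_{i}e_{i}$ is forced to be at most $(1+2m(1-\delta^{1/n}))^{n}$; combined with $V\ge\Vthresh$, this implies $M\ge\Vthresh/(1+2m(1-\delta^{1/n}))^{n}$, and substituting the theorem's value of $\delta$ reduces the resulting Markov bound to exactly $\epsilon_{\mathrm{p}}$, since $(1+2m(1-\delta^{1/n}))^{n}=\epsilon_{\mathrm{p}}\Vthresh$ by direct algebra.

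The technical heart of the argument is a per-trial bound relating the realized conditional outcome probability $p_{i}=\mathbb{P}_{e}(A_{i}B_{i}\mid X_{i}Y_{i},({\bf ABXY})_{<i})$ to $e_{i}$. The assumptions \eqref{e:nosig} and \eqref{e:condindep} imply that conditional on any past and on $E=e$, the single-trial distribution of $(A_{i}B_{i}X_{i}Y_{i})$ is non-signaling with settings marginals in $[1/4-\alpha,1/4+\alpha]$, so the relaxed Bell-function hypotheses apply to it. Writing the conditional outcome distribution as a convex combination of the sixteen local-deterministic extremes (for which $\mathbb{E}(T)\le 1$ and $\mathbb{Q}(ab\mid xy)\in\{0,1\}$) and the eight PR-box extremes (for which $\mathbb{E}(T)\le 1+m$ and $\mathbb{Q}(ab\mid xy)\in\{0,1/2\}$), and letting $w_{i}$ denote the total PR-box weight, linearity yields $e_{i}\le 1+w_{i}m$, hence $w_{i}\ge(e_{i}-1)/m$ when $e_{i}\ge 1$, while the extreme-point ceilings on conditional probability give $p_{i}\le 1-w_{i}/2$. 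Combining, $p_{i}\le(2m+1-e_{i})/(2m)$, an inequality which continues to hold trivially via $p_{i}\le 1$ whenever $e_{i}<1$.

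To assemble the deterministic bound on $\prod_{i}e_{i}$, the chain rule combined with \eqref{e:condindep} factors $\mathbb{P}_{e}({\bf AB}\mid{\bf XY})=\prod_{i}p_{i}$, so the event $\{\mathbb{P}_{e}({\bf AB}\mid{\bf XY})\ge\delta\}$ becomes $\prod_{i}(2m+1-e_{i})/(2m)\ge\delta$. Applying AM-GM to the $n$ positive quantities $(2m+1-e_{i})$ yields the arithmetic-mean lower bound $\sum_{i}(2m+1-e_{i})\ge 2mn\delta^{1/n}$, which rearranges to $\sum_{i}e_{i}\le n(1+2m(1-\delta^{1/n}))$; a second AM-GM step applied to the $e_{i}$ themselves then produces $\prod_{i}e_{i}\le(1+2m(1-\delta^{1/n}))^{n}$, closing the loop with the Markov bound described in the first paragraph.

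The main obstacle I anticipate is the careful bookkeeping needed to keep the non-signaling extreme-point decomposition available at every trial conditional on the full past and on $E=e$: the Markov-chain condition \eqref{e:condindep} is needed precisely so that the settings marginals of the conditional trial distribution still satisfy \eqref{e:indeppastalpha} and so that $\mathbb{P}_{e}({\bf AB}\mid{\bf XY})$ factors cleanly into per-trial conditional probabilities. The upper bound $\Vthresh\le(1+(3/2)m)^{n}\epsilon_{\mathrm{p}}^{-1}$ assumed in the theorem enters only at the end to guarantee that $\delta^{1/n}$ stays positive and that $(1+2m(1-\delta^{1/n}))\le 1+(3/2)m$, which is exactly the regime where the two AM-GM steps remain informative and where the derived per-trial randomness does not exceed the information-theoretic ceiling of $\log_{2}4=2$ bits per four-outcome trial.
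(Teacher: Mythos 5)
Your proposal is correct and follows essentially the same route as the paper's proof: the same test martingale $V/\prod_{i}\mathbb{E}_{e}\bigl(T_{i}\mid(\mathbf{ABXY})_{<i}\bigr)$ controlled by Markov's inequality, the same per-trial bound $\mathbb{P}(ab|xy)\le 1+(1-\mathbb{E}(T))/(2m)$ obtained from the extreme-point structure of the non-signaling polytope, the same use of the chain rule plus the Markov-chain condition to factor $\mathbb{P}_{e}(\mathbf{AB}\mid\mathbf{XY})$, and the same two applications of AM--GM to turn the event into $\prod_{i}\mathbb{E}_{e}(T_{i}\mid\cdot)\le\epsilon_{\mathrm{p}}\Vthresh$. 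The only (immaterial) difference is that you prove the per-trial lemma via the total weight on all eight PR boxes in the extremal decomposition, whereas the paper invokes a decomposition using at most one PR box; both give the identical inequality.
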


We include the constraint
$\Vthresh\leq(1+(3/2)m)^{n}\epsilon_{\text{p}}^{-1}$ for technical
reasons. Higher values of $\Vthresh$ are unreasonably large and result
in pass probabilities that are too low to be relevant. Note that this
bound ensures $\delta\ge 2^{-2n}$, a fact that will be useful in
proving the Protocol Soundness Theorem in (\ref{st:pst}).

\begin{proof}
  Since the condition on $\{E=e\}$ appears uniformly throughout, in
  this proof we omit the subscript on $\mathbb{P}_{e}$ specifying conditioning
  on $\{E=e\}$.

  The strategy of the proof is to first obtain an upper bound on the
  one-trial outcome probabilities from the expectations of Bell
  functions $T$.  This bound can be chained to give a bound on the
  probabilities of the outcome sequence as a monotonically decreasing
  function of the product of the conditional expectations of the
  $T_{i}$. That is, a larger product of expectations yields a smaller
  maximum probability and therefore more extractable randomness.  This
  product cannot be directly observed, so we relate it to the observed
  product $V$ of the $T_{i}$ via the Markov inequality applied to an
  associated positive, mean-$1$ martingale. In the following, we
  suppress the arguments $a_{i}b_{i}x_{i}y_{i}$ and
  $({\bf ABXY})_{<i}$ of $T_{i}$.
    
  The one-trial outcome probabilities are bounded by means of the
  following lemma:

  \begin{Lemma} \label{l:bound} Let $T$ satisfy the Bell-function conditions with
    bound $m>0$ and settings parameter $\alpha$.  For any non-signaling
      distribution $\mathbb{P}$ satisfying \eqref{e:indeppastalpha},
    \begin{equation}\label{e:maxprobbound} 
      \max_{abxy}\mathbb{P}(ab|xy)\le 1+
      \frac{1-\mathbb{E}[T(A,B,X,Y)]_\mathbb{P}}{2m}.
    \end{equation}
  \end{Lemma}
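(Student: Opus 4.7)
The plan is to exploit the fact that the non-signaling polytope for the $2\times 2\times 2$ scenario has a finite set of extreme points, namely the $16$ local deterministic distributions of \eqref{e:localdet} together with the $8$ Popescu-Rohrlich boxes exemplified by \eqref{e:PRbox}. I would first decompose the settings-conditional part of $\mathbb{P}$ as $\mathbb{P}(ab|xy) = \sum_{\lambda} q_{\lambda}\mathbb{P}^{\lambda}(ab|xy)$, where $\lambda$ indexes these $24$ extreme points and the $q_{\lambda}\ge 0$ sum to $1$. Write $q_{\text{PR}}=\sum_{\lambda\in\text{PR}}q_{\lambda}$ and $q_{\text{LR}}=1-q_{\text{PR}}$.

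The first step is the probability bound. Since $\mathbb{P}^{\lambda}(ab|xy)\in\{0,1\}$ for local deterministic $\lambda$ and $\mathbb{P}^{\lambda}(ab|xy)\in\{0,1/2\}$ for the PR boxes, each term in the convex combination satisfies $\mathbb{P}(ab|xy)\le q_{\text{LR}}+q_{\text{PR}}/2 = 1-q_{\text{PR}}/2$ for every $a,b,x,y$. Taking the maximum over $a,b,x,y$ preserves the inequality. Thus any lower bound on $q_{\text{PR}}$ will translate directly into the desired upper bound on $\max_{abxy}\mathbb{P}(ab|xy)$.

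The second step is to use $T$ to obtain such a lower bound on $q_{\text{PR}}$. Fixing the settings marginal of $\mathbb{P}$ (which satisfies \eqref{e:indeppastalpha} by assumption), the expectation $\mathbb{E}(T)_{\mathbb{P}}$ is linear in the settings-conditional distribution, so
\begin{equation*}
\mathbb{E}(T)_{\mathbb{P}} \;=\; \sum_{\lambda} q_{\lambda}\,\mathbb{E}(T)_{\mathbb{P}^{\lambda}_{XY}},
\end{equation*}
where $\mathbb{P}^{\lambda}_{XY}$ denotes the joint distribution formed by combining the extreme conditional $\mathbb{P}^{\lambda}$ with the settings marginal of $\mathbb{P}$. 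Each $\mathbb{P}^{\lambda}_{XY}$ for LR $\lambda$ is an LR distribution with admissible settings, hence $\mathbb{E}(T)_{\mathbb{P}^{\lambda}_{XY}}\le 1$ by the relaxed Bell-function condition; each $\mathbb{P}^{\lambda}_{XY}$ for a PR box is a non-signaling distribution with admissible settings, hence $\mathbb{E}(T)_{\mathbb{P}^{\lambda}_{XY}}\le 1+m$. Combining yields $\mathbb{E}(T)_{\mathbb{P}}\le q_{\text{LR}}+(1+m)q_{\text{PR}}=1+m\,q_{\text{PR}}$, so that $q_{\text{PR}}\ge(\mathbb{E}(T)_{\mathbb{P}}-1)/m$ whenever $\mathbb{E}(T)_{\mathbb{P}}\ge 1$.

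Substituting this into the first step gives $\max_{abxy}\mathbb{P}(ab|xy)\le 1+(1-\mathbb{E}(T)_{\mathbb{P}})/(2m)$ as required. If instead $\mathbb{E}(T)_{\mathbb{P}}<1$, the right-hand side of the claimed inequality exceeds $1$, so it holds trivially since probabilities are bounded by $1$. I do not expect a real obstacle here; the only subtle point is to keep the settings marginal fixed when writing the convex decomposition of the expectation, so that one can legitimately apply the Bell-function conditions (which are statements about joint distributions with admissible settings) to every term of the sum.
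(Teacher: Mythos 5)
Your proof is correct and follows essentially the same route as the paper's: lower-bound the total PR-box weight in an extremal decomposition via $\mathbb{E}(T)_{\mathbb{P}}\le 1+m\,q_{\mathrm{PR}}$, then use the fact that a PR box assigns conditional probability at most $1/2$ to any outcome to get $\mathbb{P}(ab|xy)\le 1-q_{\mathrm{PR}}/2$. The only (harmless) difference is that the paper invokes a refinement (Corollary 2.1 of the cited decomposition paper) to write $\mathbb{P}(ab|xy)$ as a mixture of a \emph{single} PR box and an LR distribution, whereas you work with the full set of $24$ extreme points and aggregate the PR weight, which sidesteps that corollary at no cost.
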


  \begin{proof}The settings-conditional distribution $\mathbb
      P(ab|xy)$ is non-signaling, so it can be obtained
      as a convex combination of extremal such distributions. The convex combination requires at most one PR box (\cite{bierhorst:2016}, Corollary 2.1), so we write
      $\mathbb{P}(ab|xy)=p\mathbb{Q}(ab|xy)+(1-p)\mathbb{Q}'(ab|xy)$,
      where $\mathbb{Q}$ is the PR box and $\mathbb{Q}'$ is LR.  We
      thus have
      
  \vspace{-8mm} \begin{multline}
    \mathbb{E}(T)_{\mathbb P}=\sum_{abxy} T(abxy)\mathbb{P}(abxy)= \sum_{xy} \left(\sum_{ab} T(abxy)\mathbb{P}(ab|xy)\right) \mathbb P (xy)\\
   = p\sum_{abxy}  T(abxy)\mathbb{Q}(ab|xy) \mathbb P (xy) + (1-p)\sum_{abxy}  T(abxy)\mathbb{Q}'(ab|xy) \mathbb P (xy)\\
   \le p (1+m) + (1-p)=1+pm,
     \end{multline}
     where the inequality above holds because $\mathbb{Q}(ab|xy) \mathbb P (xy)$ and $\mathbb{Q}'(ab|xy) \mathbb P (xy)$ respectively define non-signaling and LR distributions satisfying \eqref{e:indeppastalpha}, and hence these distributions respectively satisfy  $\mathbb{E}(T)\leq 1+m$ and $\mathbb{E}(T)\leq 1$. The above inequality can be re-written as $p\geq (\mathbb{E}(T)_{\mathbb{P}}-1)/m$. Now since the PR box assigns
    $xy$-conditional probability $1/2$ to at least one outcome
    different from $ab$, it follows that the $xy$-conditional
    probability relative to $\mathbb{P}$ of an outcome different from
    $ab$ is at least $p/2$. Therefore, $\mathbb{P}(ab|xy)\le 1-p/2 \le
    1-(\mathbb{E}(T)_{\mathbb{P}}-1)/(2m)$. Since $ab$ and $xy$ are
    arbitrary, this gives the inequality the lemma.
  \end{proof}
  
  We can now establish a bound on $\mathbb{P}({\bf ab}|{\bf xy})$ as follows:
  \begin{eqnarray}
    \mathbb{P}({\bf ab}|{\bf xy})
      &=& \prod_{i=1}^n\mathbb{P}(a_ib_i|({\bf ab})_{<i}, {\bf xy})\notag\\
      &=&\prod_{i=1}^n\mathbb{P}(a_ib_i|({\bf abxy})_{<i}, x_iy_i)\notag\\
      &\le& \prod_{i=1}^n\left[1+\frac{1-\mathbb{E}(T_i|({\bf abxy})_{<i})}{2m}\right].\label{e:eptstep2}
  \end{eqnarray}
  Here, the first identity is the chain rule for conditional
  probabilities, and the second follows from repeated applications of the following identity, which holds for all $j$ in $(i+1, i+2, ..., n)$ (where we recall that $({\bf xy})_{<n+1} = {\bf xy}$ and $({\bf ab})_{<i}{(\bf xy })_{<i+1} = ({\bf abxy})_{<i}, x_iy_i$). The third equality below is a consequence of \eqref{e:condindep}:
    \begin{eqnarray}\label{e:anidentity}
  \mathbb{P}(a_ib_i|({\bf ab})_{<i}, ({\bf xy})_{<j+1}) &=&\frac{  \mathbb{P}(a_ib_i,({\bf ab})_{<i}, ({\bf xy})_{<j+1})}{  \mathbb{P}(({\bf ab})_{<i}, ({\bf xy})_{<j+1})} \notag\\
  &=& \frac{  \mathbb{P}(x_{j}y_{j}|a_ib_i,({\bf ab})_{<i}, ({\bf xy})_{<j}) \mathbb{P}(a_ib_i,({\bf ab})_{<i}, ({\bf xy})_{<j})}{  \mathbb{P}(x_{j}y_{j}|({\bf ab})_{<i}, ({\bf xy})_{<j})\mathbb{P}(({\bf ab})_{<i}, ({\bf xy})_{<j})}\notag\\
  &=& \frac{  \mathbb{P}(x_{j}y_{j}| ({\bf xy})_{<j}) \mathbb{P}(a_ib_i,({\bf ab})_{<i}, ({\bf xy})_{<j})}{  \mathbb{P}(x_{j}y_{j}|({\bf xy})_{<j})\mathbb{P}(({\bf ab})_{<i}, ({\bf xy})_{<j})}\notag\\
  &=&\mathbb{P}(a_ib_i|({\bf ab})_{<i}, ({\bf xy})_{<j}).
    \end{eqnarray}
Finally, the inequality in \eqref{e:eptstep2} is a consequence of our assumption in \eqref{e:nosig} that the past-dependent distributions are non-signaling, which allows us to apply the bound from Lemma \ref{l:bound}.
      Now, by twice using
  the fact that the geometric mean of a set of positive numbers is
  always less than or equal to their arithmetic mean, we continue from
  the last line of \eqref{e:eptstep2}:
  \begin{eqnarray} 
    \prod_{i=1}^n\left[1+\frac{1-\mathbb{E}(T_i|({\bf
          abxy})_{<i})}{2m}\right] &=&
    \left(\left\{\prod_{i=1}^n\left[1+\frac{1-\mathbb{E}(T_i|({\bf
              abxy})_{<i})}{2m}\right]\right\}^\frac{1}{n}\right)^n\notag\\
    &\le&\left(\frac{\sum_{i=1}^n\left[1+\frac{1-\mathbb{E}(T_i|({\bf
              abxy})_{<i})}{2m}\right]}{n}\right)^n\notag\\
    &=&\left(1+\frac{1}{2m}-\frac{\sum_{i=1}^n\left[\frac{\mathbb{E}(T_i|({\bf
              abxy})_{<i})}{2m}\right]}{n}\right)^n\notag\\
    &\le&\left(1+\frac{1}{2m}-\left[\prod_{i=1}^n\frac{\mathbb{E}(T_i|({\bf
            abxy})_{<i})}{2m}\right]^\frac{1}{n}\right)^n\notag\\
    &=&\left(1+\frac{1-\left[\prod_{i=1}^n\mathbb{E}(T_i|({\bf
            abxy})_{<i})\right]^{\frac{1}{n}}}{2m}\right)^n. \label{e:geomean}
  \end{eqnarray}

  \begin{sloppy}
    Referring back to the statement of the theorem, we see that
    $\delta$ can be expressed as $f(\epsilon_{\text{p}}\Vthresh)$
    where $f(x)=[1+(1-\sqrt[n]{x})/2m]^n$.  Expressing
    \eqref{e:geomean} in terms of this same function $f$, we see
    that the event $\{\mathbb{P}({\bf AB}|{\bf XY})> \delta\}$ implies the
    event $\left\{f\left(\prod_{i=1}^n\mathbb{E}(T_i|({\bf ABXY})_{<i})\right)
      > \delta\right\}$. The latter event is the same as
    $\left\{\prod_{i=1}^n\mathbb{E}(T_i|({\bf ABXY})_{<i})<f^{-1}(
      \delta)=\epsilon_{\text{p}}\Vthresh\right\}$, since $f^{-1}$ is
    strictly decreasing. Conjoining the event $\{V\geq \Vthresh\}$ to
    both sides of the implication, we have $\{\mathbb{P}({\bf AB}|{\bf XY})>
    \delta,V\geq \Vthresh\}$ implies $\left\{\prod_{i=1}^n\mathbb{E}(T_i|({\bf
        ABXY})_{<i})<\epsilon_{\text{p}}\Vthresh,V\geq
      \Vthresh\right\}$, and so by the monotonicity of probabilities,
    \begin{equation}\label{e:alt1}
      \mathbb{P}\left(\mathbb{P}({\bf AB}|{\bf XY})> \delta,
        V\ge \Vthresh\right) \leq \mathbb{P}\left(\prod_{i=1}^n\mathbb{E}(T_i|(\bfABXY)_{<i})<\epsilon_{\text{p}}\Vthresh , V \geq \Vthresh\right).
    \end{equation}
    The event $\{\Phi\}$ whose probability appears on the left-hand
    side of this equation is the event in the theorem statement whose
    probability we are required to bound.  For any values of the RVs,
    the two inequalities in the event on the right-hand side imply the
    inequality in the event
    $\{\Psi\}=\left\{V/\prod_{i=1}^n\mathbb{E}(T_i|(\bfABXY)_{<i})\ge
      1/\epsilon_{\text{p}}\right\}$. Hence $\mathbb{P}(\Phi)\leq \mathbb{P}(\Psi)$.  It
    remains to show that $\mathbb{P}(\Psi)\leq \epsilon_{\text{p}}$. For this
    purpose we define the sequence $\{W_c\}_{c=1}^{n}$ of RVs by
    \begin{equation}
      W_c = \prod_{i=1}^{c}\frac{T_{i}}{\mathbb{E}(T_i|(\bfABXY)_{<i})},
    \end{equation}
    so that $\{\Psi\}=\{W_{n}\geq 1/\epsilon_{\text{p}}\}$.
    
    By definition, $W_{c}> 0$ and the factors
    $T_{i}/\mathbb{E}(T_{i}|(\bfABXY)_{<i})$ have expectation $1$ conditional
    on the past. Sequences of RVs with these properties are
    referred to as test martingales~\cite{shafer:2009} and satisfy
    that $\mathbb{E}(W_{n})=1$, which can be verified directly by induction:
    
    \begin{align}
      \mathbb{E}(W_c|({\bf ABXY})_{<c}) &=    \mathbb{E}\left(\prod_{i=1}^{c}\frac{T_{i}}{\mathbb{E}(T_i|(\bfABXY)_{<i})}\middle|({\bf ABXY})_{<c}\right)\notag\\
      &=    \mathbb{E}\left(\left(\prod_{i=1}^{c-1}\frac{ T_{i}}{\mathbb{E}(T_{i}|({\bf ABXY})_{<i})} \right)\frac{1}{\mathbb{E}(T_{c}|({\bf ABXY})_{<c})}T_{c}\middle|({\bf ABXY})_{<c}\right)\notag\\
      &=    \left(\prod_{i=1}^{c-1}\frac{ T_{i}}{\mathbb{E}(T_{i}|({\bf ABXY})_{<i})}\right) \frac{1}{\mathbb{E}(T_{c}|({\bf ABXY})_{<c})}\mathbb{E}\left(T_{c}\middle|({\bf ABXY})_{<c}\right)\notag\\
      &= W_{c-1},\label{e:alt2}
    \end{align}
    where in the second last line, we pulled out factors that are
    functions of the conditioner $(\bfABXY)_{<c}$ by applying the rule
    that if $F$ is a function of $H$, then $\mathbb{E}(FG|H)=F\mathbb{E}(G|H)$.  Taking
    the unconditional expectation of both sides of \eqref{e:alt2} and
    invoking the law of total expectation, we have
    $\mathbb{E}(W_c)=\mathbb{E}(W_{c-1})$, and so inductively, $\mathbb{E}(W_n)=\mathbb{E}(W_1)$.  Since
    $\mathbb{E}(W_{1})=1$, the claim follows. To finish the proof of the
    Entropy Production Theorem, we apply Markov's inequality to obtain
    $\mathbb{P}(W_{n}\geq 1/\epsilon_{\text{p}})\leq\epsilon_{\text{p}}$ and
    consequently $\mathbb{P}(\Phi)\leq\epsilon_{\text{p}}$.
  \end{sloppy}

\end{proof}

Now that we have proved the Entropy Production Theorem for any
past-parametrized family of Bell functions, we can justify a strategy
of setting the remaining Bell functions to $T_{i}=1$ after $\Vthresh$
is exceeded by the running product mid-protocol.  Formally, since the
running product $V_{i-1}=\prod_{i=j}^{i-1}T_{j}$ is a function of
$({\bf ABXY})_{<i}$, we can define $T_{i}=T$ conditional on
$\{V_{i-1}<\Vthresh\}$ and $T_{i}=1$ conditional on the
complement. This optional strategy can be used to eliminate the
possibility that statistical fluctuations or experimental drift could
cause $\prod_{i=1}^nT_i$ to be less than $\Vthresh$ even though the
running product exceeded $\Vthresh$ at some point prior to $n$.
  
\subsection{Choosing the Bell Function $T$}
\label{st:T} 

The Entropy Production Theorem does not indicate how to find functions
$T$ satisfying the specified conditions. We seek a high typical value
of $V=\prod_{i=1}^{n}T_{i}$, as this permits larger values of
$\Vthresh$ and consequently more extractable randomness at the same
values of $\epsilon_{\text{p}}$ and $m$. Here, we describe a procedure
for constructing a function $T$ that can be expected to perform well
if the trial results are i.i.d.~with known distribution. We estimate
the distribution from an initial portion of the run that we set aside
as training data, and in a stable experiment we expect that the trial
results' statistics are i.i.d.~to a good approximation. Note however
that the optimistic i.i.d.~assumption is only used as a heuristic to
construct $T$; once $T$ is chosen the guarantees of the Entropy
Production Theorem hold regardless of whether the trial results are
actually i.i.d. We first focus on the scenario where \eqref{e:indeppast} is assumed to hold, then show how to proceed if this is replaced with the weaker assumptions \eqref{e:indeppastalpha} and \eqref{e:condindep}.

The observed measurement outcome frequencies for training data
generally yield a weakly signaling distribution that does not exactly
satisfy the non-signaling constraints in \eqref{e:nosiggen}, due to
statistical fluctuation. Hence one can obtain an estimated
distribution by determining the maximum likelihood non-signaling
distribution for the observed measurement outcomes frequencies as
described in Ref.~\cite{zhang:2011}.  Let $N(xy)$ be the number of
training trials at setting $xy$ and $f(ab|xy)=N(ab|xy)/N(xy)$ be the
empirical frequencies of outcome $ab$ given setting $xy$. Let
$\mathbb{Q}(a,b,x,y)$ be a candidate for the probability distribution
from which these frequencies were sampled. Then up to an additive term
independent of $\mathbb{Q}$ accounting for the settings probabilities,
the log-likelihood of $f$ given $\mathbb{Q}$ is
$L(\mathbb{Q})=\sum_{a,b,x,y}N(xy)f(ab|xy)\ln(\mathbb{Q}(a,b|x,y))$. We
maximized a variant of this function to find our estimated
distribution $\mathbb{Q}(a,b,x,y)$:

\begin{align}\label{e:convexfindNS}
&\underset{\mathbb{Q}}{\text{Maximize }} \sum_{abxy}f(ab|xy)\ln \mathbb{Q}(a,b,x,y)\\
&\begin{array}{lrcll}
\!\!\text{Subject to }& \mathbb{Q}(x,y)&=&1/4 & \text{for}\quad x, y \in\{0,1\}\\
 & \mathbb{Q}(a|x,y)&=&\mathbb{Q}(a|x) & \text{for} \quad x,y \in \{0,1\}, \quad a \in \{\text{+},0\}\\
 & \mathbb{Q}(b|x,y)&=&\mathbb{Q}(b|y) & \text{for} \quad x,y \in \{0,1\}, \quad b \in \{\text{+},0\}.
\end{array}\notag
\end{align}
The first group of constraints encode our knowledge that all settings
combinations are equally likely, and the remaining constraints are
the non-signaling constraints. Note that the conditional expressions
in these constraints are equivalently expressed as linear functions of
$\mathbb{Q}(a,b,x,y)$ after using the identities $\mathbb{Q}(x,y)=1/4$.

Once the estimated distribution $\mathbb{Q}$ is obtained, we maximize
the typical values of $V$ by taking advantage of the observation that
the conditions on $T$ imply that $V^{-1}$ is a conservative $p$-value
against local realism \cite{zhang:2011}.  Such $p$-values were studied
in Ref.~\cite{zhang:2011}, which gives a general strategy, the PBR
method, for maximizing $\mathbb{E}(\ln(V))_\mathbb{Q}$. This is useful
because typical values of $V$ are close to
$\exp({n\mathbb{E}(\ln(T))_\mathbb{Q}})$: Since
$\ln(V)=\sum_{i=1}^{n}\ln(T_{i})$ is a sum of i.i.d.\ bounded terms
(given our optimistic assumption), the central limit theorem ensures
that $\ln V$ is approximately normally distributed with mean
$n\mathbb{E}(\ln(T))_\mathbb{Q}$. We therefore perform the following
optimization problem to find $T$:

\begin{align}\label{e:convexfindT}
&\underset{T}{\text{Maximize }} \mathbb{E}(\ln (T))_{\mathbb{Q}}\\
&\begin{array}{lrcll}
\!\!\text{Subject to }& \mathbb{E}(T)_{\mathbb{P}^\lambda} &\leq&1 & \forall \lambda\\
 & T(0,0,x,y) &=&1 & \forall x,y,\\
\end{array}\notag
\end{align}
where $\mathbb{P}^\lambda$ refers to the 16 conditionally deterministic LR
distributions in \eqref{e:localdet} with uniform settings distributions. This ensures that
$\mathbb{E}(T)_{\mathbb{P}_{LR}}\le 1$ for all LR distributions $\mathbb{P}_{LR}$ with uniform settings distributions. The second constraint is motivated by the fact that in our experiments, an
overwhelming fraction of the trials have no detections for both
stations.  While it is possible that a better $\mathbb{E}(\ln(T))_{\mathbb{Q}}$ can be
obtained without this constraint, we have found that the improvement
is small and likely not statistically significant given the amount of
training data used to determine the results distribution. Since the
objective functions are concave and the constraints are linear, the
optimization problems given in \eqref{e:convexfindNS} and
\eqref{e:convexfindT} are readily solved numerically with standard
tools.

Given the assumption that the trial results are i.i.d., the previous
paragraph shows that the typical values for $V$ are exponential in the
number of trials, $V = e^{-n\mathbb{E}(\ln(T))-o(n)}$.  If the
experiment is successful in showing violation of local realism,
$\mathbb{E}(\ln(T))$ is positive.  Neglecting the contribution from
$o(n)$, with $\Vthresh=e^{n\mathbb{E}(\ln(T))}$, we can bound
$-\ln(\delta)$ as
\begin{eqnarray}
  -\ln(\delta) &=& -n\ln(1+(1-(\epsilon_{\text{p}}e^{n\mathbb{E}(\ln(T))})^{1/n})/(2m))\notag\\
    &=& -n\ln(1+(1-e^{\mathbb{E}(\ln(T))+\ln(\epsilon_{\text{p}})/n})/(2m))\notag\\
     &\geq& -n(1-e^{\mathbb{E}(\ln(T))+\ln(\epsilon_{\text{p}})/n})/(2m)\notag\\
     &=& n(e^{\mathbb{E}(\ln(T))+\ln(\epsilon_{\text{p}})/n}-1)/(2m)\notag\\
     &\geq& (n\mathbb{E}(\ln(T))+\ln(\epsilon_{\text{p}}))/(2m).\label{e:yikai}
\end{eqnarray}
where we used $-\ln(1+x)\geq -x$ and $e^{x}-1\geq x$.  This shows that
asymptotically (with $\epsilon_{\text{p}}$ constant) we get at least
$\mathbb{E}(\ln(T))\log_{2}(e)/(2m)=\mathbb{E}(\log_2(T))/(2m)$ bits
of randomness per trial. For the empirical distribution obtained from
the fifth data set (``Data Set 5'') used for the protocol
according to \eqref{e:convexfindNS}, we obtain
$\mathbb{E}(\log_{2}(T))/2m=1.42\times 10^{-4}$. The bound in
Eq.~\ref{e:yikai} shows that we can get an asymptotically positive
number of bits of randomness per trial even with $\epsilon_{\text{p}}$
exponentially small in $n$.

Now we turn to the problem of finding a function satisfying the condition $\mathbb{E}(T)_{\mathbb{P}_{LR}}\le 1$ for all LR distributions $\mathbb{P}_{LR}$ with settings distribution constrained only by the weaker condition \eqref{e:indeppastalpha}, which replaces the stronger exact uniformity condition of \eqref{e:indeppast}. To do this, we show that it is sufficient to check only distributions with the extremal settings distributions where two settings have probability $1/4 + \alpha$ and the two other settings distributions have probability $1/4-\alpha$. To see why this is possible, for a fixed positive Bell function $T$, let $\mathbb P$  be an LR
  distribution whose settings distribution is constrained by \eqref{e:indeppastalpha}. Taking
  advantage of the representation in \eqref{e:local},
  
 \vspace{-8mm}   
  
  \begin{multline}
    \mathbb{E}(T)_{\mathbb P}= \sum_{abxy} T(abxy)\mathbb{P}(ab|xy)\mathbb P (xy)
   = \sum_{abxy}  T(abxy)\left(\sum_{\lambda}q_\lambda \mathbb P^\lambda(ab|xy)\right )\mathbb P(xy)\\ = \sum_{\lambda}q_\lambda\sum_{abxy}  T(abxy)\mathbb P^\lambda(ab|xy)\mathbb P(xy)\le \max_\lambda \sum_{abxy}  T(abxy) \mathbb P^\lambda(ab|xy) \mathbb P(xy),
     \end{multline} 
     so the expected value of $T$ with respect to $\mathbb P$ is
     always less than or equal to the expected value of $T$ with
     respect to a conditionally deterministic LR distribution
     $\mathbb P^\lambda$ with the same settings distribution. Since
     each deterministic LR distribution assigns conditional
     probability 1 to a single outcome $ab$ for each of the four
     setting choices $xy$, the sum
     $\sum_{abxy} T(abxy) \mathbb P^\lambda(ab|xy) \mathbb P(xy)$
     contains only four nonzero terms. Consider the two largest
     values of $T(abxy)$ and the two smallest values of $T(abxy)$
     appearing in the four nonzero terms. Note that $\sum_{abxy} T(abxy) \mathbb P^\lambda(ab|xy) \mathbb P(xy) \leq \sum_{abxy} T(abxy) \mathbb P^\lambda(ab|xy) \mathbb P^*(xy)$, where $\mathbb P^*(XY)$ is the distribution that assigns
     probability $1/4 + \alpha$ to the two settings corresponding to
     the two largest $T$, and probability $1/4-\alpha$ to the two
     settings corresponding to the two smallest $T$. Hence for any
     $T$, we can ensure that $\mathbb{E}(T)_{\mathbb P} \le 1$ holds
     for all LR distributions by checking that
     $\mathbb{E}(T)_{\mathbb P} \le 1$ holds for each conditional distribution $\mathbb P^\lambda_{AB|XY}$ coupled with each of the $i=1,\dots,\binom{4}{2}=6$ settings distributions $\mathbb S^i_{XY}$ assigning probability $1/4+\alpha$ to two settings and $1/4-\alpha$ to two other settings. This leads us to the maximization problem

 \vspace{-8mm}   
  
     \begin{align}\label{e:convexfindTalpha}
&\underset{T}{\text{Maximize }} \mathbb{E}(\ln (T))_{\mathbb{Q}}\\
&\begin{array}{lrcll}
\!\!\text{Subject to }& \mathbb{E}(T)_{\mathbb{P}^\lambda_{AB|XY}\mathbb{S}^{i}_{XY}} &\leq&1 & \forall \lambda,i\\
 & T(0,0,x,y) &=&1 & \forall x,y.\\
\end{array}\notag
\end{align}
The new problem maximizes the same objective
function as in \eqref{e:convexfindT} subject to a larger, but still
finite, number of constraints.  It can be solved numerically to find a
Bell function for the weak settings distribution.

\subsection{The TMPS Algorithm}
\label{st:trevisan}

A strong randomness extractor with parameters
$(\sigma, \epsilon,q,d,t)$ is a function
$\text{Ext}:\{0,1\}^{q}\times \{0,1\}^d \to \{0,1\}^t$ with the
property that for any random string $R$ of length $q$ and min-entropy
at least $\sigma$, and an independent, uniformly distributed seed
string $S$ of length $d$, the distribution of the concatenation
$\text{Ext}(RS)$ with S of length $t+d$ is within TV distance
$\epsilon$ of uniform. There are constructions of extractors that
extract most of the input min-entropy $\sigma$ with few seed bits. For
a review of the achievable asymptotic tradeoffs, see
Ref.~\cite{vadhan:2012}, chapter~6.  For explicit extractors that
perform well if not optimally, we used a version of Trevisan's
construction~\cite{trevisan:2001} implemented by Mauerer, Portmann and
Scholz \cite{mauerer:2012}, which we adapted\footnote{Our adapted
  source code is available at \url{https://github.com/usnistgov/libtrevisan}.} to make it functional in our environment and to
incorporate recent constructions achieving improved parameters
\cite{ma:2012}. We call this construction the TMPS algorithm. For a
fixed choice of $\sigma$, $\epsilon$ and $q$, the TMPS algorithm can
construct a strong randomness extractor for any value $t$ obeying the
following bound:
\begin{equation}\label{e:trev1}
 t+4\log_2 t \le \sigma-6 +  4\log_2(\epsilon).
\end{equation}
Given $t$, the length of the seed satisfies 
\begin{equation} \label{e:trev2} d\le w^2\cdot\max \left\{2, 1+
    \left\lceil[\log_2(t-e)-\log_2(w-e)]/[\log_2e-\log_2(e-1)]\right\rceil\right\},
\end{equation}
where $w$ is the smallest prime larger than
$2\times\lceil\log_2(4qt^2/\epsilon^2)\rceil$. We note that the TMPS
extractors are secure against classical and quantum side information
\cite{mauerer:2012}, and this security is reflected in the parameter
constraints.  Since we do not take direct advantage of this security,
it is in principle possible to improve the parameters in the Protocol
Soundness Theorem. It may
  also be possible relax the requirement of seed uniformity with more
  advanced constructions. For the purpose randomness amplification
  this is theoretically accomplished in Ref.~\cite{kessler:2017}.

  For the bound on the the number of seed bits given after the
  Protocol Soundness Theorem in the main text, we have $q=2n$ and
  $\epsilon=\epsilon_{\text{ext}}/2$. 
  Since for any
  $r$, there is a prime $w$ satisfying $r<w\leq 2r$,
  $w=O(\log(n)+\log(t/\epsilon))=O(\log(nt/\epsilon))$, where we
  pulled out exponents from the $\log$, and dropped and arbitrarily
  increased the implicit constants in front of each term to match
  summands. The coefficient of $w^{2}$ in the bound on $d$ is
  $O(\log(t))$, because of the ``minus'' sign in front of the term
  containing $w$. Multiplying gives
  $d=O(\log(t)\log(nt/\epsilon_{\mathrm{ext}})^{2})$.

\subsection{Proof of the Protocol Soundness Theorem}
\label{st:pst}

The distinction between the stations was needed to establish the
inequality in the Entropy Production Theorem and plays no further role
in this section.  We therefore simplify the notation by abbreviating
${\bf C}={\bf AB}$ and either ${\bf Z}={\bf XY}$ or
${\bf Z}={\bf XY}E$. In the former case $\mathbb{P}(\ldots)$ refers to
probabilities conditional on $\{E=e\}$. Otherwise, $\mathbb{P}(\ldots)$
involves no implicit conditions. The Protocol Soundness Theorem holds
regardless of which definition of ${\bf Z}$ is in force.  We write
$R_{\text{pass}}$ to refer to the RV that takes value $1$ conditional
on the passing event $\{V\ge \Vthresh\}$ and $0$ otherwise. The constants
$\epsilon_{\text{p}}$ and $\delta$ appearing below are the same as in
the Entropy Production Theorem.

\begin{Theorem}
  Let $0<\epsilon_{\mathrm{ext}},\kappa<1$. Suppose $\mathbb P(\mathrm{pass})\ge\kappa$, and suppose $t$ is a positive integer satisfying
  \begin{equation}\label{e:mttrev1st}
    t+4\log_2t \le -\log_2 \delta + \log_2\kappa +5\log_2 \epsilon_{\mathrm{ext}} -11.
  \end{equation}
  Then if $\text{Ext}:\{0,1\}^{2n}\times \{0,1\}^d \to \{0,1\}^t$ is
  obtained by the TMPS algorithm with parameters
  $\sigma=-\log_{2}[2\delta/(\kappa\epsilon_{\mathrm{ext}})]$ and
  $\epsilon=\epsilon_{\mathrm{ext}}/2$, and {\bf S} is a random
  bit string of length $d$ independent of the joint distribution of
  ${\bf C},{\bf Z},R_{\mathrm{pass}}$, then the joint distribution of ${\bf U}=\mathrm{Ext}({\bf CS})$, ${\bf Z}$, ${\bf S}$ and $R_{\mathrm{pass}}$ satisfies
  \begin{equation}\label{e:pst}
    \mathrm{TV}\big(\mathbb{P}_{{\bf UZS}|R_{\mathrm{pass}}=1}, \mathbb{P}^{\mathrm{unif}}_{{\bf U}}\mathbb{P}^{\mathrm{unif}}_{{\bf S}}\mathbb{P}_{{\bf Z}|R_{\mathrm{pass}}=1}\big) \le \epsilon_{\mathrm{p}}/\mathbb P(\mathrm{pass})+\epsilon_{\mathrm{ext}},
  \end{equation}
  where $\mathbb{P}^{\mathrm{unif}}$ denotes the
  uniform probability distribution.
\end{Theorem}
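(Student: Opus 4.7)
The plan is to reduce the theorem to the strong-extractor guarantee of the TMPS algorithm by constructing an auxiliary distribution $\hat{\mathbb{P}}$ on $({\bf C},{\bf Z})$ (with ${\bf C}={\bf AB}$) that (i) has the same ${\bf Z}$-marginal as the pass-conditioned distribution $\mathbb{P}_{{\bf C}{\bf Z}\mid\mathrm{pass}}$ and (ii) has conditional min-entropy $H_\infty({\bf C}\mid{\bf Z}={\bf z})\ge \sigma$ for every ${\bf z}$, where $\sigma = -\log_2[2\delta/(\kappa\epsilon_{\mathrm{ext}})]$. Matching the ${\bf Z}$-marginals is crucial: it makes the ideal targets for $\hat{\mathbb{P}}$ and for $\mathbb{P}_{{\bf C}{\bf Z}\mid\mathrm{pass}}$ coincide, so the final estimate involves only a single triangle-inequality step combined with the data-processing inequality \eqref{e:classicalprocessing}.

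The construction is driven by two ``bad'' sets. By the Entropy Production Theorem, $\mathcal{B}=\{\mathbb{P}_e({\bf C}\mid{\bf Z})>\delta\}\cap\{\mathrm{pass}\}$ has unconditional probability at most $\epsilon_{\mathrm{p}}$, so $\mathbb{P}(\mathcal{B}\mid\mathrm{pass})\le\epsilon_{\mathrm{p}}/\mathbb{P}(\mathrm{pass})$. The set $\mathcal{Z}_{\mathrm{bad}}=\{{\bf z}:\mathbb{P}(\mathrm{pass}\mid{\bf z})<\kappa\epsilon_{\mathrm{ext}}/2\}$ contributes at most $\epsilon_{\mathrm{ext}}/2$ to the pass-conditioned distribution, since $\mathbb{P}(\mathrm{pass})\ge\kappa$. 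For ${\bf z}\notin\mathcal{Z}_{\mathrm{bad}}$ and $({\bf c},{\bf z})\notin\mathcal{B}$, one has $\mathbb{P}({\bf c}\mid{\bf z},\mathrm{pass})=\mathbb{P}({\bf c}\mid{\bf z})/\mathbb{P}(\mathrm{pass}\mid{\bf z})\le\delta/(\kappa\epsilon_{\mathrm{ext}}/2)=2^{-\sigma}$, matching the extractor's requirement exactly. I would take $\hat{\mathbb{P}}({\bf Z}={\bf z})=\mathbb{P}({\bf Z}={\bf z}\mid\mathrm{pass})$ and define $\hat{\mathbb{P}}({\bf C}\mid{\bf z})$ by zeroing out the mass on $\mathcal{B}_{\bf z}=\{{\bf c}:({\bf c},{\bf z})\in\mathcal{B}\}$ (and, when ${\bf z}\in\mathcal{Z}_{\mathrm{bad}}$, replacing the whole conditional by a high-min-entropy distribution such as uniform), and then redistributing the displaced mass into the ``slack'' below the $2^{-\sigma}$ cap on the remaining outcomes; the EPT constraint $\delta\ge 2^{-2n}$ gives total slack $2^{2n}\cdot 2^{-\sigma}-1\ge 1$, which suffices. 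Summing conditional distances through \eqref{e:TVsameconditionals} with the matched ${\bf Z}$-marginals yields $\mathrm{TV}(\hat{\mathbb{P}}_{{\bf C}{\bf Z}},\mathbb{P}_{{\bf C}{\bf Z}\mid\mathrm{pass}})\le\epsilon_{\mathrm{p}}/\mathbb{P}(\mathrm{pass})+\epsilon_{\mathrm{ext}}/2$.

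The finishing argument applies the $(\sigma,\epsilon_{\mathrm{ext}}/2)$-strong extractor property of TMPS to $\hat{\mathbb{P}}$: since every conditional $\hat{\mathbb{P}}({\bf C}\mid{\bf z})$ has min-entropy at least $\sigma$, averaging the per-${\bf z}$ extractor bound using \eqref{e:TVsameconditionals} with the trivially matched ${\bf Z}$-marginals gives $\mathrm{TV}(\hat{\mathbb{P}}_{\mathrm{Ext}({\bf C},{\bf S}),{\bf Z},{\bf S}},\mathbb{P}^{\mathrm{unif}}_{\bf U}\hat{\mathbb{P}}_{\bf Z}\mathbb{P}^{\mathrm{unif}}_{\bf S})\le\epsilon_{\mathrm{ext}}/2$. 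Because $\hat{\mathbb{P}}_{\bf Z}=\mathbb{P}_{{\bf Z}\mid\mathrm{pass}}$, combining this with the previous TV bound by one triangle inequality and the data-processing inequality \eqref{e:classicalprocessing} applied to the deterministic map $({\bf c},{\bf z},{\bf s})\mapsto(\mathrm{Ext}({\bf c},{\bf s}),{\bf z},{\bf s})$ delivers the target inequality $\epsilon_{\mathrm{p}}/\mathbb{P}(\mathrm{pass})+\epsilon_{\mathrm{ext}}$.

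The main obstacle is the explicit construction of $\hat{\mathbb{P}}$ so that the per-${\bf z}$ min-entropy bound holds uniformly and exactly (not merely up to constants), while simultaneously locking the ${\bf Z}$-marginal to $\mathbb{P}_{{\bf Z}\mid\mathrm{pass}}$ and holding the overall TV budget at $\epsilon_{\mathrm{p}}/\mathbb{P}(\mathrm{pass})+\epsilon_{\mathrm{ext}}/2$. The calibration $2^{-\sigma}=\delta/(\kappa\epsilon_{\mathrm{ext}}/2)$ is precisely what makes the good-outcome probability bound equal the extractor threshold, and $\delta\ge 2^{-2n}$ is what guarantees enough low-probability headroom to absorb the truncated mass; any loss of even a small constant factor in the min-entropy of $\hat{\mathbb{P}}$ would show up as an additive shift in \eqref{e:mttrev1st}, so the redistribution must be arranged with care.
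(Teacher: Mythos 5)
Your proposal is correct and lands on exactly the right parameters, but the second half takes a genuinely different route from the paper. Both proofs share the first step: modify the pass-conditional distribution of $\mathbf{C}$ given $\mathbf{z}$ by capping/truncating the over-large conditional probabilities, paying a TV cost that the Entropy Production Theorem bounds by $\epsilon_{\mathrm{p}}/\mathbb{P}(\mathrm{pass})$, with feasibility of the mass redistribution guaranteed by $\delta\ge 2^{-2n}$. They diverge afterward. The paper keeps a $\mathbf{z}$-dependent cap $\delta/\mathbb{P}(\mathrm{pass}\,|\,\mathbf{z})$, which yields only a bound of $\delta/\kappa$ on the \emph{average} guessing probability over $\mathbf{z}$, and then invokes Proposition~1 of Ref.~\cite{konig:2008} to convert that average-min-entropy statement into an extractor guarantee (the $\log_2(2/\epsilon_{\mathrm{ext}})$ entropy penalty in that proposition is where the factor $2/(\kappa\epsilon_{\mathrm{ext}})$ in $\sigma$ comes from). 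You instead enforce a \emph{worst-case} per-$\mathbf{z}$ bound $2^{-\sigma}$ by exiling the settings with $\mathbb{P}(\mathrm{pass}\,|\,\mathbf{z})<\kappa\epsilon_{\mathrm{ext}}/2$ into the error budget (costing $\epsilon_{\mathrm{ext}}/2$ conditional on passing, since $\mathbb{P}(\mathbf{z}\,|\,\mathrm{pass})\le(\epsilon_{\mathrm{ext}}/2)\,\mathbb{P}(\mathbf{z})$ for such $\mathbf{z}$), which lets you apply the bare strong-extractor definition per $\mathbf{z}$ and average via \eqref{e:TVsameconditionals}. In effect you have re-derived the content of the K\"onig--Terhal proposition by hand for classical side information; what this buys is a self-contained proof with no external proposition, at the price of a somewhat more elaborate construction of the auxiliary distribution. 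The error split $\epsilon_{\mathrm{ext}}/2+\epsilon_{\mathrm{ext}}/2$ and the resulting constraint \eqref{e:mttrev1st} come out identically.

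One small repair is needed in the redistribution step. You zero out the mass on $\mathcal{B}_{\mathbf{z}}$ and propose to absorb it in the slack below the $2^{-\sigma}$ cap ``on the remaining outcomes,'' but the slack available on $\mathcal{B}_{\mathbf{z}}^{c}$ alone is $|\mathcal{B}_{\mathbf{z}}^{c}|\,2^{-\sigma}-\mathbb{P}(\mathcal{B}_{\mathbf{z}}^{c}\,|\,\mathbf{z},\mathrm{pass})$, which is not obviously at least $\mathbb{P}(\mathcal{B}_{\mathbf{z}}\,|\,\mathbf{z},\mathrm{pass})$ when $\mathcal{B}_{\mathbf{z}}$ is large; the quantity $2^{2n}2^{-\sigma}-1\ge 1$ you compute is the slack over \emph{all} $2^{2n}$ outcomes. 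Either permit redistribution back onto the zeroed outcomes up to the cap, or (as the paper does) cap at $2^{-\sigma}$ rather than zero out, so that only the excess above the cap must be relocated and the counting argument $\sum_{\mathbf{c}}(2^{-\sigma}-\mathbb{P}(\mathbf{c}\,|\,\mathbf{z},\mathrm{pass}))\ge 0$ applies directly; in either case the per-$\mathbf{z}$ TV cost remains bounded by $\mathbb{P}(\mathcal{B}_{\mathbf{z}}\,|\,\mathbf{z},\mathrm{pass})$ and your budget is unaffected.
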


At this point it is tempting to just apply an extractor to ${\bf AB}$
with parameter $\sigma$ given by the nominal
$\epsilon_{\text{p}}$-smooth min-entropy
$\sigma=-\log_{2}(\delta)$. However, this does not guarantee the
strong condition \eqref{e:pst}. Specifically, there are three reasons
that \eqref{e:1} of the Entropy Production Theorem does not
immediately support the application of an extractor to ${\bf AB}$. The
first is that as specified, the extractor input should have
min-entropy $-\log_2\max_{\bf ab}\mathbb{P}({\bf AB}={\bf ab})=\sigma$
with no smoothness error. The second is that the settings-conditional
smooth min-entropies can be substantially smaller than the nominal
one.  The third is that the min-entropy is also affected by the
probability of passing being less than $1$. Accounting for
these effects requires an analysis of the settings- and
pass-conditional distributions and the extractor parameters specified
in the theorem.

\begin{proof}

The proof proceeds in two main steps inspired by the corresponding
arguments in Ref.~\cite{pironio:2013}. In the first we determine a
probability distribution $\mathbb{P}^*$ that is within
$\epsilon_{\text{p}}$ of $\mathbb{P}$ but satisfies an appropriate
bound on the conditional probabilities of ${\bf C}$ with probability
$1$ rather than $1-\epsilon_{\text{p}}$.  The distribution
$\mathbb{P}^*$'s marginals agree with those of $\mathbb{P}$ on ${\bf
  ZS}$. The probabilities conditional on aborting also agree, and
uniformity and independence of ${\bf S}$ is preserved.  In the second
step, we apply a proposition from Ref.~\cite{konig:2008} on applying
extractors to distributions such as $\mathbb{P}^{*}$ whose average
maximum conditional probabilities satisfy a specified bound. The proposition enables us to determine
the extractor parameters that achieve the required final distance
$\epsilon_{\mathrm{p}}/\mathbb P(\mathrm{pass}) +
\epsilon_{\mathrm{ext}}$ in the theorem.

  The Entropy Production Theorem guarantees that
  $\mathbb{P}(\mathbb{P}({\bf C}|{\bf Z})>\delta,
  R_{\text{pass}}=1)\leq\epsilon_{\text{p}}$. In the case where $E$ is
  included in $\bf{Z}$, this follows by the uniformity in $\{E=e\}$ of
  the theorem's conclusion:
  \begin{eqnarray}
    \mathbb{P}(\mathbb{P}({\bf C}|{\bf Z},E)>\delta, R_{\text{pass}}=1)
    &=& \sum_{e}\mathbb{P}(\mathbb{P}({\bf C}|{\bf Z},E)>\delta, R_{\text{pass}}=1|E=e)\mathbb{P}(E=e)\notag\\
    &=&\sum_{e}\mathbb{P}(\mathbb{P}({\bf C}|{\bf Z},E=e)>\delta, R_{\text{pass}}=1|E=e)\mathbb{P}(E=e)\notag\\
    &\le&\sum_{e}\epsilon_{\text{p}}\mathbb{P}(e)\notag\\
    &=&\epsilon_{\text{p}}.
  \end{eqnarray}

  Using the following construction, one may observe that for any
  random variable $U$ with values in a set of cardinality $K$ and
  $\gamma$ satisfying $1/K\le\gamma$, and any distribution
  $\mathbb{P}'$ of $U$, there exists $\mathbb{P}''$ such that
 $\mathbb{P}''(U=u)\leq\gamma$ for all possible outcomes $u$ and $\mathbb{P}''$ is within TV distance
  $\mathbb{P}'(\mathbb{P}'(U)>\gamma)$ of $\mathbb{P}'$.  To construct
  $\mathbb{P}''$, for $u$ such that $\mathbb{P}'(u)>\gamma$, set
  $\mathbb{P}''(u)=\gamma$. To compensate for the reduced
  probabilities, increase the values of $\mathbb{P}'$ to obtain those
  of $\mathbb{P}''$ without exceeding $\gamma$ on the set $\{u
  :\mathbb{P}'(u)\le\gamma\}$ so that $\mathbb{P}''$ is a normalized
  probability distribution. This is possible because in constructing
  $\mathbb{P}''$ from $\mathbb{P}'$, the total reduction in
  probability on $\{u:\mathbb{P}'(u)>\gamma\}$ given by
  $r_{-}=\sum_{u:\mathbb{P}'(u)>\gamma}(\mathbb{P}'(u)-\gamma)$ is
  less than the maximum total increase possible given by
  $r_{+}=\sum_{u:\mathbb{P}'(u)\le\gamma}(\gamma-\mathbb{P}'(u))$, as
  a consequence of $\gamma\geq 1/K$.  To see this, compute
  $r_{+}-r_{-} = \sum_{u}(\gamma-\mathbb{P}'(u))\geq
  \sum_{u}(1/K-\mathbb{P}'(u))= 0$.  The distance
  $\text{TV}(\mathbb{P}',\mathbb{P}'')$ is given by
  $\sum_{u:\mathbb{P}'(u)>\gamma}(\mathbb{P}'(u)-\gamma) \le
  \mathbb{P}'(\mathbb{P}'(U)>\gamma)$. We can now construct
  $\mathbb{P}^*$ by defining its conditional distributions on ${\bf
    C}$.  For this, substitute $U\leftarrow {\bf C}$,
  $\mathbb{P}'(U)\leftarrow \mathbb{P}({\bf C}|{\bf
    z},R_{\text{pass}}=1)$, $\gamma\leftarrow
  \delta/\mathbb{P}(R_{\text{pass}}=1|{\bf z})$ and
  $\mathbb{P}''(U)\leftarrow \mathbb{P}^*({\bf C}|{\bf
    z},R_{\text{pass}}=1)$.  The constraint on $\gamma$ is satisfied
  because the upper bound on $\Vthresh$ in the statement of the
  Entropy Production Theorem ensures that $\delta\geq2^{-2n}$.  Each
  conditional distribution satisfies $\mathbb{P}^*({\bf C}|{\bf
    z},R_{\text{pass}}=1)\leq \delta/\mathbb{P}(R_{\text{pass}}=1|{\bf
    z})$, which is equivalent to $\mathbb{P}^*({\bf
    C},R_{\text{pass}}=1|{\bf z})\leq \delta$, and is within TV
  distance $\mathbb{P}\big (\mathbb{P}({\bf C}|{\bf z},
  R_{\text{pass}}=1)>\delta/\mathbb{P}(R_{\text{pass}=1}|{\bf z})\big
  |{\bf z},R_{\text{pass}}=1 \big)$ of $\mathbb{P}_{{\bf C}|{\bf
      z},R_{\text{pass}}=1}$. The joint probability distribution
  $\mathbb{P}^*$ is determined pointwise from the already assigned
  values of $\mathbb{P}^{*}({\bf c}|{\bf z}r_{\text{pass}})$ for
  $r_{\text{pass}}=1$ as
  \begin{equation}
    \mathbb{P}^*({\bf czs}r_{\text{pass}}) =
    \left\{\begin{array}{ll}
        \mathbb{P}^*({\bf c}|{\bf z}r_{\text{pass}})
         \mathbb{P}({\bf zs}r_{\text{pass}}) & \textrm{if $r_{\text{pass}}=1$}\\
        \mathbb{P}({\bf czs}r_{\text{pass}})&\textrm{otherwise}.
      \end{array}\right.
  \end{equation}
  Since the marginal distribution of ${\bf ZS}R_{\text{pass}}$ is
  unchanged, the full TV distance between $\mathbb{P}$ and $\mathbb{P}^*$ is given by
  the average conditional TV distance with respect to
  ${\bf ZS}R_{\text{pass}}$, see \eqref{e:TVsameconditionals}.  Since
  the conditional TV distance is zero when $R_{\text{pass}}=0$ and
  from independence of ${\bf S}$, we obtain
  \begin{eqnarray}
    \text{TV}(\mathbb{P}^*_{{\bf CZS}R_{\text{pass}}},\mathbb{P}_{{\bf CZS}R_{\text{pass}}}) \hspace*{-1in}&&\notag\\
    &=&\sum_{{\bf zs}r_{\text{pass}}}
    \text{TV}\big(\mathbb{P}^*_{{\bf C}|{\bf zs}r_{\text{pass}}},
    \mathbb{P}_{{\bf C}|{\bf zs}r_{\text{pass}}}\big) \mathbb{P}({\bf zs}r_{\text{pass}})
    \notag\\
    &=&\sum_{{\bf zs}r_{\text{pass}}}
    \text{TV}\big(\mathbb{P}^*_{{\bf C}|{\bf zs}r_{\text{pass}}},
    \mathbb{P}_{{\bf C}|{\bf zs}r_{\text{pass}}}\big) \llbracket r_{\text{pass}}=1\rrbracket \mathbb{P}({\bf zs}r_{\text{pass}})
    \notag\\
    &\le& \sum_{{\bf zs}r_{\text{pass}}}
  \mathbb{P}\big(\mathbb{P}({\bf C},R_{\text{pass}}=1|{\bf z})>\delta\big |{\bf z},R_{\text{pass}}=1\big)
   \llbracket r_{\text{pass}}=1\rrbracket
   \mathbb{P}({\bf zs}r_{\text{pass}})\notag\\
    &=& \sum_{{\bf z}r_{\text{pass}}}
  \mathbb{P}\big(\mathbb{P}({\bf C},R_{\text{pass}}=1|{\bf z})>\delta\big |{\bf z},R_{\text{pass}}=1\big)
  \llbracket r_{\text{pass}}=1\rrbracket
   \mathbb{P}({\bf z}r_{\text{pass}})\notag\\
    &=& \sum_{{\bf cz}r_{\text{pass}}}
    \llbracket \mathbb{P}({\bf c}r_{\text{pass}}|{\bf z})>\delta\rrbracket
    \mathbb{P}({\bf c}|{\bf z}r_{\text{pass}})\llbracket r_{\text{pass}}=1\rrbracket
   \mathbb{P}({\bf z}r_{\text{pass}})\notag\\
&=&   \sum_{{\bf cz}r_{\text{pass}}}
    \llbracket \mathbb{P}({\bf c}r_{\text{pass}}|{\bf z})>\delta\rrbracket
   \llbracket r_{\text{pass}}=1\rrbracket  \mathbb{P}({\bf c}{\bf z}r_{\text{pass}})\notag\\
    &=& \mathbb{P}(\mathbb{P}({\bf C}R_{\text{pass}}|{\bf Z})>\delta , R_{\text{pass}}=1)\notag\\  
    &\leq&\mathbb{P}(\mathbb{P}({\bf C}|{\bf Z})>\delta , R_{\text{pass}}=1 )\notag\\
    &\leq& \epsilon_{\text{p}}.
  \end{eqnarray} 
 
  At this point we can also bound the TV distance
  conditional on passing. Since $\mathbb{P}^*(R_{\text{pass}}) =
  \mathbb{P}(R_{\text{pass}})$, we can apply \eqref{e:TVsameconditionals} and
  the above bound on the distance to get
  \begin{eqnarray}
    \epsilon_{\text{p}} &\geq& 
    \text{TV}\big(\mathbb{P}^*_{{\bf CZS}R_{\text{pass}}},\mathbb{P}_{{\bf CZS}R_{\text{pass}}}\big)\notag\\
    &=&
    \sum_{r}  \text{TV}\big(\mathbb{P}^*_{{\bf CZS}|R_{\text{pass}}=r},\mathbb{P}_{{\bf CZS}|R_{\text{pass}}=r}\big)\mathbb{P}(R_{\text{pass}}=r)  \notag\\
    &=&
    \text{TV}\big(\mathbb{P}^*_{{\bf CZS}|R_{\text{pass}}=1},\mathbb{P}_{{\bf CZS}|R_{\text{pass}}=1}\big)\mathbb{P}(R_{\text{pass}}=1).
  \end{eqnarray}
  We conclude that
  \begin{equation}\label{e:step1conclusion}
    \text{TV}\big(\mathbb{P}^*_{{\bf CZS}|R_{\text{pass}}=1},\mathbb{P}_{{\bf CZS}|R_{\text{pass}}=1}\big)\leq \epsilon_{\text{p}}/\mathbb{P}(R_{\text{pass}}=1).
  \end{equation}

  For the second main step, we need the average ``guessing
  probability'' of ${\bf C}$ given ${\bf Z}$ conditional on
  $\{R_{\text{pass}}=1\}$.  This is given by
  \begin{eqnarray}\label{e:yanbao}
    \sum_{{\bf z}} \max_{{\bf c}}(\mathbb{P}^*({\bf c}|{\bf z},R_{\text{pass}}=1))\mathbb{P}({\bf z}|R_{\text{pass}}=1) &\le&
    \sum_{{\bf z}} \frac{\delta}{\mathbb{P}(R_{\text{pass}}=1|{\bf z})}\mathbb{P}({\bf z}|R_{\text{pass}}=1) \notag\\
    &=& \delta \sum_{{\bf z}}\frac{\mathbb{P}({\bf z})}{\mathbb{P}(R_{\text{pass}}=1)} \notag \\
    &\leq& \delta/\kappa.
  \end{eqnarray}
  We remark that here it is necessary to assume
    the lower bound $\kappa $ on $\mathbb P (R_{\text{pass}}=1)$ in
    order to proceed; otherwise the bound in \eqref{e:yanbao} would
    become unbounded due to potentially arbitrarily small values of $\mathbb P (R_{\text{pass}}=1)$. Now
  we can apply Proposition 1 of Ref.~\cite{konig:2008}. The next lemma
  extracts the conclusion of this proposition in the form we need. It
  is obtained by substituting the variables and expressions in the
  reference as follows: $X\leftarrow {\bf C}$, $Y\leftarrow {\bf S}$,
  $E\leftarrow {\bf Z}$, $\mathsf{E}(X,Y) \leftarrow \text{Ext}({\bf
    CS})$, $k\leftarrow -\log_{2}(\delta/\kappa)
  -\log_{2}(2/\epsilon_{\text{ext}})$, $\epsilon\leftarrow
  \epsilon_{\text{ext}}/2$ and the distributions are replaced with the
  corresponding ones that are conditional on $\{R_{\text{pass}}=1\}$.
  The guessing entropy in the reference is the negative logarithm of
  the the average guessing probability in \eqref{e:yanbao}.

  \begin{Lemma} 
    \begin{sloppy}
      Suppose that $\mathrm{Ext}$ is a strong extractor with
      parameters
      $(-\log_{2}(2\delta/(\kappa\epsilon_{\mathrm{ext}})),\epsilon_{\mathrm{ext}}/2,2n,d,t)$. Write
      ${\bf U}=\mathrm{Ext}({\bf CS})$.  Then we have the following
      bound:
      \begin{equation}
        TV\big (\mathbb{P}^*_{{\bf UZS}|R_{\mathrm{pass}}=1},         \mathbb{P}^{\mathrm{unif}}_{{\bf U}} \mathbb{P}_{{\bf S}} \mathbb{P}^*_{{\bf Z}|R_{\mathrm{pass}}=1}\big) \leq \epsilon_{\mathrm{ext}}. \label{e:lemmapm}
      \end{equation}
    \end{sloppy}
  \end{Lemma}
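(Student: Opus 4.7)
The plan is to derive \eqref{e:lemmapm} as a direct application of Proposition~1 of Ref.~\cite{konig:2008}, which asserts that a strong extractor with min-entropy parameter $k$ and error $\epsilon$ produces output within total variation distance $\epsilon$ of uniform (jointly with the seed and the side information) whenever the conditional min-entropy of the source given the side information is at least $k$; equivalently, whenever the average guessing probability is at most $2^{-k}$. The preparatory work already done in the enclosing proof supplies exactly the hypothesis needed: by construction of $\mathbb{P}^{*}$, each conditional distribution satisfies $\mathbb{P}^{*}(\mathbf{c}|\mathbf{z},R_{\mathrm{pass}}=1) \leq \delta/\mathbb{P}(R_{\mathrm{pass}}=1|\mathbf{z})$, and averaging over $\mathbf{z}$ as in \eqref{e:yanbao} yields the bound $\delta/\kappa$ on the average guessing probability of $\mathbf{C}$ given $\mathbf{Z}$ under $\mathbb{P}^{*}_{\cdot|R_{\mathrm{pass}}=1}$.

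Next I would carry out the variable substitution indicated just before the lemma statement. Setting $X \leftarrow \mathbf{C}$, $Y \leftarrow \mathbf{S}$, $E \leftarrow \mathbf{Z}$, $\mathsf{E}(X,Y) \leftarrow \mathrm{Ext}(\mathbf{C},\mathbf{S})$, and interpreting all probabilities as conditional on $\{R_{\mathrm{pass}}=1\}$, the guessing-probability bound $\delta/\kappa$ translates into an average conditional min-entropy of $-\log_{2}(\delta/\kappa)$. Comparing with the extractor's required min-entropy parameter $\sigma = -\log_{2}(2\delta/(\kappa\epsilon_{\mathrm{ext}}))$, one sees after a one-line rearrangement that the available min-entropy exceeds $\sigma$ by exactly $\log_{2}(2/\epsilon_{\mathrm{ext}})$. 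This slack is precisely what the cited proposition uses to convert an average-case guessing bound into a worst-case conditional min-entropy bound on a distribution $\epsilon_{\mathrm{ext}}/2$-close to $\mathbb{P}^{*}_{\cdot|R_{\mathrm{pass}}=1}$, at which point the strong-extractor guarantee with error $\epsilon_{\mathrm{ext}}/2$ is applied. The triangle inequality for TV distance then combines the two $\epsilon_{\mathrm{ext}}/2$ terms into the claimed bound $\epsilon_{\mathrm{ext}}$ of \eqref{e:lemmapm}.

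The main obstacle is the verification of the independence hypotheses implicit in Proposition~1 of \cite{konig:2008}. That proposition requires the seed $\mathbf{S}$ to be uniform and independent of $(\mathbf{C},\mathbf{Z})$ under the distribution being extracted; here the theorem hypothesis assumes $\mathbf{S}$ independent of the joint $(\mathbf{C},\mathbf{Z},R_{\mathrm{pass}})$, so conditioning on the event $\{R_{\mathrm{pass}}=1\}$ preserves both uniformity and independence. A secondary but easier point to check is that the switch from $\mathbb{P}$ to $\mathbb{P}^{*}$ does not invalidate any of these independence conditions, which holds because $\mathbb{P}^{*}$ modifies only the $\mathbf{C}$-marginal conditional on $(\mathbf{Z},R_{\mathrm{pass}})$ while leaving the distribution of $(\mathbf{Z},\mathbf{S},R_{\mathrm{pass}})$ unchanged. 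Once these checks are in place, the proof reduces to plugging in the parameters $k = \sigma$ and $\epsilon = \epsilon_{\mathrm{ext}}/2$ into the cited proposition and reading off the conclusion.
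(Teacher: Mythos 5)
Your proposal matches the paper's own argument: the lemma is obtained there exactly as a direct application of Proposition~1 of K\"onig--Terhal, with the same substitutions $X\leftarrow{\bf C}$, $Y\leftarrow{\bf S}$, $E\leftarrow{\bf Z}$, $k\leftarrow-\log_2(\delta/\kappa)-\log_2(2/\epsilon_{\mathrm{ext}})=\sigma$, $\epsilon\leftarrow\epsilon_{\mathrm{ext}}/2$, and the average guessing-probability bound $\delta/\kappa$ from the preceding computation. Your additional checks that conditioning on $\{R_{\mathrm{pass}}=1\}$ and passing from $\mathbb{P}$ to $\mathbb{P}^{*}$ preserve the uniformity and independence of ${\bf S}$ are correct and consistent with what the paper implicitly assumes.
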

  
  To apply the lemma, we obtain $\text{Ext}$ by the TMPS algorithm
  with the parameters in the lemma. Expanding the logarithms as
  $\sigma=-\log_{2}(\delta)+\log_{2}(\kappa) +
  \log_{2}(\epsilon_{\text{ext}})-1$ and
  $\log_{2}(\epsilon)=\log_{2}(\epsilon_{\text{ext}})-1$ and
  substituting in Eq.~\ref{e:trev1} gives the requirement
  \begin{equation}
    t+4\log_2 t\le -\log_{2}(\delta) + \log_{2}(\kappa) + 5\log_2(\epsilon_{\text{ext}})-11,
  \end{equation}
  as asserted in the Protocol Soundness Theorem.
  The number of seed bits $d$ is obtained from Eq.~\ref{e:trev2}.

  It remains to determine the overall TV distance conditional on
  passing. Applying \eqref{e:classicalprocessing} with
  $V = {\bf C, Z, S }$ and $F$ defined as
  $F({\bf C, Z, S})=\big(\text{Ext}({\bf C,S}), {\bf Z}, {\bf
    S}\big)$, and applying \eqref{e:step1conclusion}, we have
\begin{equation}\label{e:uzs}
\text{TV}\big(\mathbb{P}^*_{{\bf UZS}|R_{\text{pass}}=1},\mathbb{P}_{{\bf UZS}|R_{\text{pass}}=1}\big)\leq
\text{TV}\big(\mathbb{P}^*_{{\bf CZS}|R_{\text{pass}}=1},\mathbb{P}_{{\bf CZS}|R_{\text{pass}}=1}\big)\leq \epsilon_{\mathrm{p}}/\mathbb P (R_{\text{pass}}=1).
\end{equation}
Then by \eqref{e:TIforTV}, \eqref{e:lemmapm} and \eqref{e:uzs}, we have
\begin{equation}\label{e:wehave}
  \text{TV}\big(\mathbb{P}_{{\bf UZS}|R_{\text{pass}}=1}, \mathbb{P}^{\text{unif}}_{{\bf U}}\mathbb{P}^{\text{unif}}_{{\bf S}}\mathbb{P}^*_{{\bf Z}|R_{\text{pass}}=1}\big)
  \le \epsilon_{\text{ext}} +\epsilon_{\mathrm{p}}/\mathbb P (R_{\text{pass}}=1).
\end{equation}
As $\mathbb{P}^*_{{\bf Z}|R_{\text{pass}}=1}=\mathbb{P}_{{\bf
    Z}|R_{\text{pass}}=1}$, the statement of the theorem
follows.\end{proof}

As discussed in the main text, the Protocol Soundness Theorem implies that the unconditional TV distance from an ``ideal protocol'' can be bounded by $\max (\epsilon_{\mathrm{p}} +
  \epsilon_{\mathrm{ext}}, \kappa)$. This error parameter is closely related to
  the security definitions appearing in, for instance,
  Equation (1) of \cite{portmann:2014} and Definition 4 of
  \cite{arnon:2016}. To explain how we arrive at $\max (
  \epsilon_{\mathrm{p}} + \epsilon_{\mathrm{ext}}, \kappa)$, note that an ideal protocol may abort with positive probability, but conditioned on not aborting it produces perfectly uniform output independent of side
  information. That is, the
  distribution of an ideal protocol $\mathbb P^{\mathrm{ideal}}_{{\bf
      UZS}R_{\text{pass}}}$ must satisfy $\mathbb
  P^{\mathrm{ideal}}_{{\bf UZS}|R_{\text{pass}}=1} =
  \mathbb{P}^{\text{unif}}_{{\bf U}}\mathbb{P}^{\text{unif}}_{{\bf
      S}}\mathbb{P}^{\mathrm{ideal}}_{{\bf Z}|R_{\text{pass}}=1}$, but
  the distribution of the ideal protocol is otherwise unconstrained
  when $R_{\text{pass}}=0$. Given our actual protocol distribution
  $\mathbb P$ we can define a particular ideal distribution with the
  same probability of passing as the actual protocol by setting
  $\mathbb{P}^{\mathrm{ideal}}_{{\bf UZS}|R_{\text{pass}}=1} =
  \mathbb{P}^{\text{unif}}_{{\bf U}}\mathbb{P}^{\text{unif}}_{{\bf
      S}}\mathbb{P}_{{\bf Z}|R_{\text{pass}}=1}$,
  $\mathbb{P}^{\mathrm{ideal}}_{{\bf UZS}|R_{\text{pass}}=0} =
  \mathbb{P}_{{\bf UZS}|R_{\text{pass}}=0}$, and
  $\mathbb{P}^{\mathrm{ideal}}(R_{\text{pass}}=1) =
  \mathbb{P}(R_{\text{pass}}=1)$. If $\mathbb{P}(R_{\text{pass}}=1)\ge
  \kappa$, the unconditional TV distance from $\mathbb{P}$ to this
  ideal protocol can be bounded by
\begin{eqnarray}
TV(\mathbb P_{{\bf UZS}R_{\text{pass}}}, \mathbb P^{\mathrm{ideal}}_{{\bf UZS}R_{\text{pass}}})&=&
\sum_{r=0,1} TV(\mathbb P_{{\bf UZS}|R_{\text{pass}}=r}, \mathbb P^{\mathrm{ideal}}_{{\bf UZS}|R_{\text{pass}}=r})\mathbb P (R_{\text{pass}}=r)\notag\\
&=&TV(\mathbb P_{{\bf UZS}|R_{\text{pass}}=1}, \mathbb P^{\mathrm{ideal}}_{{\bf UZS}|R_{\text{pass}}=1})\mathbb P (R_{\text{pass}}=1)\notag\\
&\le&\left[\epsilon_{\mathrm{p}}/\mathbb P(R_{\text{pass}}=1) + \epsilon_{\mathrm{ext}} \right]\mathbb P (R_{\text{pass}}=1)\notag\\
&\le& \epsilon_{\mathrm{p}}+ \epsilon_{\mathrm{ext}},
\end{eqnarray}
where above we used, in order, \eqref{e:TVsameconditionals}, $\mathbb{P}^{\mathrm{ideal}}_{{\bf UZS}|R_{\text{pass}}=0} =
  \mathbb{P}_{{\bf UZS}|R_{\text{pass}}=0}$, \eqref{e:pst}, and $\mathbb {P} ( R_{\mathrm{pass}}=1)\le 1$. Alternatively, if $\mathbb{P}(R_{\text{pass}}=1)< \kappa$, we have
\begin{eqnarray}
TV(\mathbb P_{{\bf UZS}R_{\text{pass}}}, \mathbb P^{\mathrm{ideal}}_{{\bf UZS}R_{\text{pass}}})
&=&TV(\mathbb P_{{\bf UZS}|R_{\text{pass}}=1}, \mathbb P^{\mathrm{ideal}}_{{\bf UZS}|R_{\text{pass}}=1})\mathbb P (R_{\text{pass}}=1)\notag\\
&\le&1\cdot \kappa \notag\\
&=&\kappa,
\end{eqnarray}
as the TV distance can never be greater than one. Thus we see that the distance from the ideal protocol is bounded by $\max (\epsilon_{\mathrm{p}}+ \epsilon_{\mathrm{ext}}, \kappa)$. However, as noted in the main text, we considered a more conservative overall error parameter $\epsilon_{\mathrm{fin}}=\max(\epsilon_{\mathrm{p}}/\kappa+ \epsilon_{\mathrm{ext}},\kappa)$. This ensures that for all pass probabilities exceeding $\kappa$, the pass-conditional distribution of the output is within $\epsilon_{\mathrm{p}}/\mathbb{P}(\mathrm{pass}) + \epsilon_{\mathrm{ext}} \le \epsilon_{\mathrm{p}}/\kappa+ \epsilon_{\mathrm{ext}}\le\epsilon_{\mathrm{fin}}$ of $\mathbb{P}^{\text{unif}}_{{\bf U}}\mathbb{P}^{\text{unif}}_{{\bf S}}\mathbb{P}_{{\bf Z}|R_{\text{pass}}=1}$.

\subsection{Protocol Application Details}
\label{st:actual}

The Protocol Soundness Theorem supports the protocol given in Table
\ref{t:protocol}, with overall soundness error given by
$\epsilon_{\mathrm{fin}} = \max(\epsilon_{\mathrm
  {p}}/\kappa+\epsilon_{\mathrm{ext}}, \kappa)$. A protocol is
furthermore {\it complete} if there exist real-world systems that pass
the protocol with reasonably high probability. The completeness of our
protocol is supported by quantum mechanics, which predicts
experimental distributions that violate nontrivial
  Bell inequalities\cite{BELL} and pass the protocol with high
probability. Completeness is also witnessed by our repeated successful
implementations of the protocol.

\begin{table}[h]\centering
\caption{\label{t:protocol}Protocol for Randomness Generation}
\begin{tabular}{ lccccc }
 \hline
\multicolumn{6}{|p{15cm}|}{ \rule{0pt}{3ex} 1. Choose a Bell function $T$ satisfying the conditions of the Entropy Production Theorem, a number of trials $n$ to be run, a threshold for passing $\Vthresh>1$, error parameters $\epsilon_{\mathrm p}, \epsilon_{\mathrm{ext}},\kappa>0$, and a positive integer $t$ for which \eqref{e:mttrev1} is satisfied.}\\
\multicolumn{6}{|p{15cm}|}{ \rule{0pt}{3ex} 2. (Entropy Production) Run a succession of $n$ experimental trials, where in each trial $i$ Alice and Bob randomly and uniformly choose respective settings $X_i,Y_i \in \{0,1\}$, and record respective outputs $A_i,B_i\in\{\text{+},0\}$. (Optional) Calculate $\prod_{j=1}^i T(A_j,B_j,X_j,Y_j)$ after each trial and re-set $T$ to the constant function $1$ for the remainder of the experiment if $\prod_{j=1}^i T(A_j,B_j,X_j,Y_j)>\Vthresh$.} \\
\multicolumn{6}{|p{15cm}|}{ \rule{0pt}{3ex} 3. Compute $\prod_{i=1}^n T(A_i,B_i,X_i,Y_i)$ and abort if this quantity does not exceed $\Vthresh$.} \\
\multicolumn{6}{|p{15cm}|}{ \rule{0pt}{3ex} 4. (Extraction) Generate a random and uniform $d$-bit seed string $\bf S$ where $d$ is given by \eqref{e:trev2} with $q=2n, \epsilon=\epsilon_{\mathrm{ext}}/2$. Output ${\bf U} = \text{Ext}({{\bf AB},{\bf S}})$ with the security guarantee given by \eqref{e:mtfinal}.}\\
 \hline
\end{tabular}
\end{table}

The five new data sets reported in the main paper were taken in
2017. Each trial in a data set encompassed fourteen time intervals, and in a given
trial, the outcome ``$+$'' was recorded if there was a detection in
any one of these intervals and ``$0$'' otherwise. The number of intervals was
 fixed and chosen in advance of running the
protocol. The five data sets
were analyzed in the order in which they were taken. We determined the
Bell function $T$ from training data consisting of the first $5\times
10^{6}$ trials as explained in \ref{st:T}.  We chose $5\times 10^6$
trials so that we could obtain a Bell function $T$ using an
accurate estimate of the experimental distribution of measurement
outcomes without sacrificing too much data that could be used for
randomness generation. After the protocol was
officially run on a data set, the same data set was re-analyzed using
different lengths of training portions to see if a different length
should be used for subsequent data sets, but there was never clear
evidence to suggest that we should
have used a different length for the training portion.
  
After training, we inferred an expected value $n\mu$ and variance
$n\sigma^{2}$ of $\sum_{i=1}^{n}\ln(T_{i})$ on the remaining trials
assuming i.i.d.\ trials and Gaussian statistics according to the
central limit theorem, where $n$ and $\mu$ were calculated according
to the distribution obtained from the optimization problem of
\eqref{e:convexfindNS}. Note that under these assumptions, we treat
$\sum_{i=1}^{n}\ln(T_{i})$ as if it were a sum of independent and
bounded RVs. Since $V = \exp\left(\sum_{i=1}^{n}\ln (T_{i})\right)$ we
can then choose $\Vthresh$ so that it has a $0.95$ chance of being
exceeded according to the Gaussian approximation, by setting
$\Vthresh=e^{n\mu-1.645\sqrt{n}\sigma}$. For Data Sets 3, 4 and 5, $\Vthresh$ was
chosen to be smaller than this value to increase the chance of passing
the protocol while still meeting desirable benchmarks for extractable
randomness.
 
We now discuss our application of the protocol to Data Set 5, and then summarize the main results for all five data sets in Table \ref{t:summary}. Data Set 5 set consists of 60,110,210 trials, roughly twice as long as each of the first four data sets. The counts for each trial outcome from the first $5\times 10^{6}$ trials are shown in Table \ref{t:rawcounts}.  The maximum likelihood non-signaling distribution
corresponding to these counts is shown in Table~\ref{t:nosig}. We
determined $T$ from this distribution, the values of $T$ are shown in
Table~\ref{t:PBR} of the main text.

  \begin{table}[h]\centering\caption{Result counts for the first $5\times
        10^{6}$ trials of Data Set 5.}\label{t:rawcounts}
 \begin{tabular}{ r|c|c|c|c| }
 \multicolumn{1}{r}{}
  &  \multicolumn{1}{c}{$ab=\text{++}$}
 &  \multicolumn{1}{c}{$ab=\text{+}0$}
 &  \multicolumn{1}{c}{$ab=0\text{+}$} 
   &  \multicolumn{1}{c}{$ab=00$}
\\
  \cline{2-5}
   $xy=00$&3166 & 1851 & 2043 & 1243520 \\
 \cline{2-5}
   $xy=01$& 3637 & 1338 & 13544 & 1230633 \\
 \cline{2-5}
   $xy=10$& 3992 & 13752 & 1226 & 1230686 \\
 \cline{2-5}
   $xy=11$&357 & 17648 & 16841 & 1215766 \\
 \cline{2-5}
 \end{tabular}
\end{table} 
 \begin{table}[h]\centering\caption{Maximum likelihood non-signaling distribution according to
the counts in Table~\ref{t:rawcounts}, rounded to eight decimal places.}\label{t:nosig}
 \begin{tabular}{ r|c|c|c|c| }
 \multicolumn{1}{r}{}
  &  \multicolumn{1}{c}{$ab=\text{++}$}
 &  \multicolumn{1}{c}{$ab=\text{+}0$}
 &  \multicolumn{1}{c}{$ab=0\text{+}$} 
   &  \multicolumn{1}{c}{$ab=00$}
\\
  \cline{2-5}
   $xy=00$&  0.00063301  &  0.00036794   & 0.00041085  &  0.24858820 \\
 \cline{2-5}
   $xy=01$&  0.00073159  &  0.00026936  &  0.00270824  &  0.24629081 \\
 \cline{2-5}
   $xy=10$&  0.00080002  &  0.00277179  &  0.00024384  &  0.24618435 \\ 
 \cline{2-5}
   $xy=11$&  0.00007087   & 0.00350093  &  0.00336896  &  0.24305924 \\ 
 \cline{2-5}
 \end{tabular}
\end{table} 

The $0.95$ rule for determining $\Vthresh$ given that there are 55,110,210 trials for the protocol yields $\Vthresh=8.79\times10^{36}$. We chose a more conservative value of $\Vthresh=1.5\times10^{32}$ to improve the odds of passing the protocol, while still allowing for the extraction of 1024 bits uniform to within $10^{-12}$. This threshold corresponds to a probability of passing of roughly 0.9916 according to the i.i.d. scenario described above. Running the protocol, this threshold was exceeded, with a final value of $V=2.018\times 10^{41}$.

The running product $\prod_{i=1}^c T_{i}$ first exceeded $\Vthresh$ at
trial number $c=41,243,976$, and one has the option of setting the
remaining $T_i=1$ regardless of outcome for the rest of the data
run. The soundness of this procedure is justified by the adaptive
properties of the Entropy Production Theorem. In our application of
the protocol, we implemented a similar strategy without technically
changing the Bell function, by relabeling all outcomes to $0$ starting
at trial number $c+1$. This also results in $T_i=1$ for the remainder
of the experiment. This strategy is justified as our assumptions allow
for Alice and Bob to cooperatively make arbitrary changes to the
experiment in advance of a trial based on the past, which includes the
current running product. Turning off the detectors to guarantee
outcomes of $0$ is one such change, and in principle there was
sufficient time (at least $5\,\mu s$) for the necessary communication
to take place after the previous trial.

Throughout, we did not consider the length $d$ of the seed in making
our choices and determined $d$ from the other parameters according to
\eqref{e:trev2}. For applying the extractor to Data Set 5, we
used 315,844 seed bits. The seed bits were
collected from one of the random number generators used to select the settings in \cite{shalm:2015}. Specifically, each seed bit came from the XOR of two bits generated by the photon-sampling random number generator described in \cite{shalm:2015}.
It took 317 seconds for our computer to construct the extractor
according to the TMPS algorithm and generate the explicit final output
string. Here is the final output string that results from applying the extractor to the string ${\bf AB}$, when ${\bf AB}$ is obtained with relabeling of all outcomes to 0 starting at trial number $41,243,977$ (after $\Vthresh$ is exceeded by the running product). 

\begin{center}
\begin{tiny}
\begin{singlespace}
11100010011111111101001100001111100101010101001101111001111010110101101000011011000111010001101000111010011110011100101101100100 

10111111111001100010110010110111101100101111010011001101101111010100111001011010111111011110010100110001000101011000001111111101 

11011001110001111100010010011100011100000000010110010101101111001011001001000001101110110000000111110111001110001100101110001100

10110110001100011101001001001010101000100001010101001001011101010101001010100111001101001010001010100001101111110110011011110000

11100110100110010111001011000110010100101000110101100100000110111000101101001101110110111111001110110011100000001111001111101100

10110000111110011100110111110110101111000001010001010110100010011101011000001001011100010110101101111100110100001110101110110101

10001010011111011110111001000001000110111111110011101001110100111000000100101100010011101110100001110101111001001011111111001100

01111011101001101010101100010010000011111110010101011010111111100011110110001010111011000001111000011111101100100010001001000010
\end{singlespace}
\end{tiny}
\end{center}

 \begin{table}[h]\centering\caption{Summary of application of protocol to data sets. For fixed goal choices of $\epsilon_{\mathrm{fin}}$, the error parameters were computed according to the formula $\epsilon_{\text{p}}=\kappa^2=(0.95\,\epsilon_{\text{fin}})^2$, $\epsilon_{\text{ext}}=0.05\,\epsilon_{\text{fin}}$. Error parameters were chosen in advance of running the protocol for Data Sets 3, 4 and 5; the $\epsilon_{\mathrm{fin}}$ and $t$ values for Data Sets 1 and 2 are marked with an asterisk as they were not chosen in advance and are only included for illustrative purposes. We remark that the quantity $1/\Vthresh$ can also be interpreted as a p-value against local realism \cite{zhang:2011}.}\label{t:summary}
 \begin{tabular}{ r|ccccccc }
   Data Set & n & m &  95\% cut off& $\Vthresh$ & $\epsilon_{\mathrm{fin}}$ & $t$ &$V>\Vthresh$ \\
\hline
   $1$ &  24865320  & 0.01066 &  $4.68\times 10^{16}$  &  $4.68\times 10^{16}$& $10^{-6*}$ & $512^{*}$ & Yes\\
   $2$ &  24809970  & 0.01126 &  $1.30\times 10^{5}$  &  $1.30\times 10^{5}$& $0.01^{*}$ & $61^*$ & Yes\\
   $3$ &  24818959  & 0.01163 &  $9.74\times 10^{19}$  &  $10^{17}$& $10^{-6}$ & 512 &Yes\\
   $4$ &  24846822  & 0.01063 &  $6.57\times 10^{15}$  &  $10^{15}$& $10^{-6}$ & 256 &Yes\\ 
   $5$ &  55110210  & 0.01004 &  $8.79\times 10^{36}$  &  $1.5\times10^{32}$& $10^{-12}$ & 1024 &Yes\\ 
 \end{tabular}
\end{table}

\begin{table}\centering\caption{2-tail p-values for consistency checks}\label{t:consistencychecks}
 \begin{tabular}{ r|ccccccc }
   Data Set &    Sig. 1 &Sig. 2 &Sig. 3 &Sig. 4 \\
\hline
  Data Set 1 & 0.507 & 0.777 &0.290 &0.323 \\
  Data Set 2 & 0.765 & 0.965 &0.115 &0.684    \\
  Data Set 3 & 0.633 & 0.072 &0.381 &0.099\\
  Data Set 4 & 0.144 & 0.320 &0.844 &0.356  \\
  Data Set 5 &  0.879 &  0.131 & 0.554 & 0.885 \\
 \end{tabular}
\end{table}

After the protocol was run, we ran consistency checks on the data sets to look for potential inconsistencies with \eqref{e:mtnosig}, the no-signaling assumption. 
Using the tests described in Ref.~\cite{shalm:2015}, we examined the four signaling equalities: 1: $\mathbb{P}(A|X=0,Y)=\mathbb{P}(A|X=0)$, 2: $\mathbb{P}(A|X=1,Y)=\mathbb{P}(A|X=1)$, 3: $\mathbb{P}(B|X,Y=0)=\mathbb{P}(B|Y=0)$, and 4: $\mathbb{P}(B|X,Y=1)=\mathbb{P}(B|Y=1)$. For these tests we used statistics whose asymptotic distributions would approach the standard normal with mean $0$ and variance $1$, if the trials were i.i.d. We report the p-values obtained from these tests for all data sets in Table \ref{t:consistencychecks}, which do not suggest any inconsistencies.

Prior to the analysis of the five data sets reported in the main text,
the protocol was applied to data sets taken as part of the experiment
reported in Ref.~\cite{shalm:2015}. These results are described in
\cite{randomv1}. After setting aside the first $5\times 10^7$ trials
of the data set XOR 3 as a training set to construct the function $T$
and choose a threshhold $\Vthresh$ based on the $95\%$ rule, the
protocol was applied to the rest of the data set with parameters
$\epsilon_{\mathrm{p}} = 3.1797 \times 10^{-4}$ and
$\epsilon_{\mathrm{ext}}=3.533 \times 10^{-5}$, which were chosen to
minimize $\epsilon_{\mathrm{p}}/\kappa + \epsilon_{\mathrm{ext}}$ for
$\kappa= 1/3$ while satisfying \eqref{e:mttrev1}. This choice of
parameters was suboptimal for minimizing either
$\epsilon_{\mathrm{fin}}$ or $\max (\epsilon_{\mathrm{p}}+
\epsilon_{\mathrm{ext}},\kappa )$, the two figures of merit disucssed
in the main text. However, the instance of the TMPS algorithm induced
by the above choice of parameters would have been induced by other
choices of parameters that perform better according to these figures
of merit. The same extraction is induced by $\epsilon_{\mathrm{p}} =
3.6509\times 10^{-4}$, $\epsilon_{\mathrm{ext}}=3.5330 \times
10^{-5}$, and $\kappa = 4.0042\times 10^{-4}$, which leads to a
distance of $\max (\epsilon_{\mathrm{p}}+
\epsilon_{\mathrm{ext}},\kappa )=4.0042\times 10^{-4}$ from an ideal
protocol for the extraction of 256 bits. We can also choose
$\epsilon_{\mathrm{p}} = 3.370\times 10^{-4}$,
$\epsilon_{\mathrm{ext}}=3.533 \times 10^{-5}$, and $\kappa = 0.0184$
to induce the same extraction with an $\epsilon_{\mathrm{fin}}$
parameter of $0.0184$. \ignore{Semi-permanent
  elaboration of what's going on here: Alan was told ``$\sigma$'' and
  ``$\epsilon$'' values based on which he extracted XOR 3; these
  figures were 357.17974717821 and
  $0.000017665=\epsilon_{\mathrm{ext}}/2$, respectively. Any other
  parameters that yield these same values will yield the same
  extractor function, so we can assess the extraction based on other
  parameter splits. This exercise entailed ``locking in''
  $\epsilon_{\mathrm{ext}}$ with the original choice, while $\kappa$
  and $\epsilon_{\mathrm{p}}$ could be varied so long as they gave at
  least the earlier $\sigma$ value ($\sigma$ is a function of $\kappa$
  and $\epsilon_{\mathrm{p}}$, as well as other fixed parameters like
  $\Vthresh$ and $m$). As a side note, if you try to optimize the
  figures of merit without locking in $\epsilon_{\mathrm{ext}}$, I was
  able to get as low as $3.9943\times 10^{-4}$ for ``distance from
  ideal'' and, in a separate parameter optimization,
  $\epsilon_{\mathrm{fin}} = 0.01536$. So, luckily, our original
  parameter choices did not constrain us too badly on either
  front.}

Statistically significant settings nonuniformity was detected for some
of the sets examined in \cite{randomv1}. This was consistent with the
finding in \cite{shalm:2015} that a combination of uncontrolled
environmental variables and the synchronization electronics introduced
small biases in the settings. This effect is not present in the 2017 data sets, which used a
reliable pseudorandom source for settings randomness. As the Entropy
Production Theorem can tolerate small biases in the settings
distribution, we can explore how the protocol would have performed on
XOR 3 had we selected, prior to running the protocol, a nonzero
settings-bias parameter $\alpha$. We note that the protocol parameters
must be chosen prior to executing a secure protocol, and since we did
not choose a nonzero $\alpha$ in advance of examining XOR 3, we report
the following calculations only as a retrospective diagnostic. In
principle it is impossible to measure $\alpha$ through statistical
tests of the output of the random number generators that choose the
settings, because the settings probability can appear random,
unbiased, and independent even while changing from trial to trial
within the bounds of a potentially large $\alpha$. To
choose an example $\alpha$ to study, we examined 95 \% confidence
intervals for the individual settings probabilities from the six data
sets in \cite{shalm:2015}. The largest absolute difference from $0.5$
among the endpoints of these six intervals was $0.000211$ for Alice
and $0.000150$ for Bob. Assuming independence between Alice and Bob
(an assumption which was not contradicted by our statistical tests),
we computed the most and least likely measurement configurations given
this largest difference from $0.5$ for Alice's and Bob's settings
probability, and found that these would be contained in the interval
$(0.25-\alpha,0.25+\alpha)$ for $\alpha = 0.000181$. For this example
choice of $\alpha$, performing the modified optimization problem
described in \ref{st:T} yields a $T$ function with $m=0.01179$, and
for this $T$ function, the expected threshold computed according to
the 95 \% rule is $v_{\text{thresh}}=5.25 \times 10^5$, if we assume
the ``worst-case'' settings distribution among the six extremal
settings distributions that assign probability $0.25+\alpha$ to two
settings configurations and $0.25-\alpha$ to two other settings
configurations. This threshold is passed when the protocol is re-run
now with this non-zero $\alpha$. For $\epsilon_{\mathrm{p}}$ values of
$(0.01, 0.001,0.0001,0.00001)$ we get corresponding $-\log_2 \delta$
values of $(524, 383, 242, 101)$, which is a moderate reduction
compared to the corresponding values of $(582, 444, 306, 168)$
obtained by the running the protocol with $\alpha=0$. Alternatively,
we can fix $\epsilon_{\mathrm{p}}$ and study how $-\log_2\delta$
changes with $\alpha$. For one particular choice of
$\epsilon_{\mathrm{p}}=3.1797\times 10^{-4}$, which was the smallest
$\epsilon_{\mathrm p}$ value considered earlier in analyses of XOR 3,
$\alpha$ values of (0, 0.00001, 0.0001, 0.001) yield $-\log_2 \delta$
values of (367, 366, 321, 94). The largest value
  $\alpha=0.001$ in this list may be considered a conservative choice:
  if in the first calculation above we had used $99.999998\,\%$ instead of $95\,\%$ confidence
  intervals, we would have obtained a value of $\alpha\approx 0.001/3$
  instead of $\alpha=0.000181$.

\subsection{Performance of Previous Protocols.}
\label{st:previous}

Other protocols in the literature could not be used for our data sets
for various reasons. Many protocols
  apply to different measurement scenarios. For
  instance, \cite{colbeck:2011} describes a protocol involving three
  separated measurement stations, and while \cite{coudron:2014}
  provides impressive expansion rates and is secure against quantum
  side information, it requires eight separate devices. Other
  protocols exploring quantum side information in
Refs. \cite{miller:2014,chung:2014,miller2:2014} either also
apply to different experimental setups or provide only asymptotic
security results as the number of trials $n$ approaches
infinity. The first protocol achieving security against quantum
  side information \cite{vazirani:2012} applies to a bipartite
  experiment like ours but requires systems that achieve per-trial
  Bell violations much higher than ours. Another study
  \cite{thinh:2016} of bipartite experiments with data regimes
  characteristic of photonic systems applies to i.i.d.\ scenarios.

The protocols of Refs.~\cite{pironio:2010,arnon:2016} are applicable to our experimental scenario
  while making minimal assumptions, and given enough trials could work
  for any violation regime. Ref.~\cite{pironio:2010} obtained
  protocols for assumptions equivalent to ours, but considered
  also the case where the distributions are in addition assumed to
  be quantum achievable.
  Ref.~\cite{arnon:2016}, which uses the Entropy Accumulation Theorem of Ref.~\cite{dupuis:2016},
  obtained protocols assuming that the distributions are quantum
  achievable, but allowing for quantum side information. However, these
protocols are ineffective for the numbers of trials in our data sets,
which we illustrate with a heuristic argument. Both protocols are
based on the Clauser-Horne-Shimony-Holt (CHSH) Bell function
\cite{CHSH}
\begin{equation}
T^c(a,b,x,y) =\begin{cases}
1 & \text{ if } (x,y)\ne (1,1) \text{ and } a=b\\
1 & \text{ if } (x,y)= (1,1) \text{ and } a\ne b\\
0 & \text{ otherwise. }
\end{cases}
\end{equation}
The statistic $\overline {T^c}=n^{-1}\sum_{i=1}^nT^c_i$
used by these protocols for witnessing accumulated violation satisfies
$\mathbb{E}(\overline {T^c}) \le 0.75$ under LR, while $\mathbb{E}(\overline
{T^c})=0.75009787$ for the distribution in Table \ref{t:nosig}. The
completely predictable LR theory that only produces ``00'' outcomes
regardless of the settings satisfies $\mathbb{E}(\overline {T^c}) = 0.75$, but
in an experiment of $n=55,110,210$ trials, this theory can produce a
value of $\overline {T^c}$ exceeding 0.75009787 with probability
roughly $0.047$. Thus, based on this statistic alone, we cannot infer the
presence of any low-error randomness.

The protocol of Ref.~\cite{pironio:2010} (the PM protocol for short,
see \cite{pironio:2013,fehr:2013} for amendments), can be modified to
work with any Bell function, and there are methods for obtaining
better Bell functions \cite{nieto:2014,bancal:2014} or simultaneously
using a suite of Bell functions \cite{nieto:2016}. Here, we
demonstrate that for any choice of Bell function, the method of
\cite{pironio:2010} as refined in \cite{pironio:2013} cannot be
expected to effectively certify any randomness from an experiment
distributed according to Table \ref{t:nosig} unless the number of
trials exceeds $1.56\times 10^{8}$, which is larger than the
number of trials in our data runs.

For the most informative comparison to our protocol, we consider the
PM protocol without their additional constraint that the distribution
be induced by a quantum state. To derive a bound on the performance of
the PM protocol, we refer to Theorem 1 of \cite{pironio:2013}.  This
theorem involves a choice of Bell function denoted by $I$ (analogous
to our $T$), a threshold $J_{m}$ (analogous to our
$\Vthresh$) to be exceeded by the Bell estimator $\bar{I}
  = n^{-1}\sum_{i=1}^n I_i$, and a function $f$ that we discuss below.
To be able to extract some randomness, the theorem requires that
\begin{equation}\label{e:pirbound}
nf(J_m-\mu) > 0.
\end{equation}
The parameter $\mu$ is given by $(I_{\text{max}} +
I_{\text{NS}})\sqrt{(2/n)\ln(1/\epsilon)}$ where $I_{\text{max}}$ is
the largest value in the range of the Bell function $I$,
$I_{\text{NS}}\leq I_{\text{max}}$ is the largest possible expected
value of $I$ for non-signaling distributions, and $0<\epsilon\leq 1$
is a free parameter that is added to the TV distance from uniform for
the final output string. Smaller choices of $\epsilon$, which is
analogous to our $\epsilon_{\text{p}}$, are desirable but require
larger $n$ for the constraint \eqref{e:pirbound} to be positive as we
will see below. We also note that \eqref{e:pirbound} is a necessary
but not sufficient condition for extracting randomness; in particular,
we ignore the negative contribution from the parameter $\epsilon'$ of
\cite{pironio:2013} (somewhat analogous to the parameter $\kappa$ in the statement of the Protocol Soundness Theorem in \ref{st:pst}) as well as
any error introduced in the extraction step.

For \eqref{e:pirbound}, we can without loss of generality consider
only Bell functions for which $0 \le I_L < I_{\text{NS}}\le
I_{\text{max}}$, where $I_L$ is the maximum expectation of $I$ for LR
distributions. Further, because the relevant quantities below are
invariant when the Bell function is rescaled, we can assume $I_{L}=1$.
According to Ref.~\cite{pironio:2013}'s Eq.~8 and the following
paragraph, we can write
$f(x)=-\log_{2}(g(x))$, where $g$ is monotonically decreasing and
concave, and satisfies
\begin{equation}\label{e:maxpir}
\max_{ab}\mathbb{P}(ab|xy)\leq g(\mathbb{E}(I)_{\mathbb{P}})
\end{equation}
for all $xy$ and non-signaling distributions $\mathbb{P}$. (Recall that we are
not using the stronger constraint that $\mathbb{P}$ be induced by a quantum
state.)  According to \eqref{e:maxprobbound} we can define
$g(x)=1+(1-x)/(2(I_{\text{NS}}-1))$. Later we argue that
this definition of $g$ cannot be improved. Substituting into 
\eqref{e:pirbound} we get the inequality
\begin{equation}\label{e:pirminbound}
  - n \log_2\left[1+\frac{1-J_m  + (I_{\text{max}} + I_{\text{NS}})\sqrt{\frac{2}{n}\ln{\frac{1}{\epsilon}}}}{2(I_{\text{NS}}-1)}\right] >0.
\end{equation}
Since $2(I_{\text{NS}}-1)$ is
positive, this is equivalent to 
\begin{equation}\label{e:nbound1}
  \sqrt{\frac{2}{n}\ln{\frac{1}{\epsilon}}}<\frac{J_m-1}{I_{\text{max}}+I_{\text{NS}}}.
\end{equation}
Noting that $I_{\text{max}}+I_{\text{NS}}\ge 2I_{\text{NS}}$, this implies
\begin{equation}
  \sqrt{\frac{2}{n}\ln{\frac{1}{\epsilon}}} <\frac{J_m-1}{2I_{\text{NS}}}.
\end{equation}
Thus, the number of trials needed to extract randomness  by
the PM protocol is bounded below according to 
\begin{equation}
  \label{e:pirreq}
  n> 8\frac{\ln(1/\epsilon)I_{\text{NS}}^{2}}{(J_{m}-1)^{2}}.
\end{equation}
For a given anticipated experimental distribution $\mathbb{P}_{\text{ant}}$,
$J_m$ is best chosen to be at most $\mathbb{E}(I)_{\mathbb{P}_{\text{ant}}}$.  Otherwise,
the probability that $\bar I$ exceeds $J_{m}$ is small. However, for
the maximum amount of extractable randomness, $J_{m}$ should be close
to $\mathbb{E}(I)_{\mathbb{P}_{\text{ant}}}$.  Consider the inferred distribution from the first $5\times 10^6$ trials of Data Set 5. By following the procedure given in
Section 2 of \cite{bierhorst:2016}, we can write this distribution as
a convex combination of a PR box with weight $p=3.915\times 10^{-4}$
and an LR distribution with weight $1-p$. From this we see that one should choose $J_{m} \leq \mathbb{E}(I)_{\mathbb{P}_{\text{ant}}}
  = pI_{\text{NS}}+(1-p) \leq pI_{\text{NS}}+1$.  Substituting into
\eqref{e:pirreq} and using $\epsilon\le 0.05$ (a rather
high bound on the allowable TV distance from uniform) gives
\begin{equation}
  \label{e:prreq}
  n>8\frac{\ln(1/\epsilon)}{p^{2}}\geq 1.56\times 10^{8},
\end{equation}
which is already more than twice the number of trials used to generate randomness in Data Set 5. For smaller error values comparable to the ones we report, this bound only increases: achieving $\epsilon=10^{-12}$ would require at least $1.44 \times 10^{9}$ trials.

To finish our argument that the PM protocol cannot improve on this
bound under our assumptions, consider the definition of $g$.  If we
could find a function $g'\leq g$ with $g'(x)<g(x)$ for some
$x\in(1,I_{\text{NS}}]$, then $f=-\log_{2}(g')$ might yield a smaller
lower bound on $n$.  Note that for $x\leq 1$, $g'(x)\geq g'(1)$ and
$g'(1)$ must be at least 1 because, referring to \eqref{e:maxpir},
there is a conditionally deterministic LR distribution $\mathbb{P}$ satisfying
$\mathbb{E}(I)_{\mathbb{P}}=1$ and $\max_{ab}\mathbb{P}(ab|xy) =1$. Hence \eqref{e:pirbound} cannot
be satisfied for arguments $x$ of $f(x)=-\log_2(g'(x))$ with
$x\leq 1$. Given $x\in(1,I_{\text{NS}}]$, write
$x=(1-p)+pI_{\text{NS}}$. Let $\mathbb{Q}$ be the PR box achieving
$\mathbb{E}(I)_{\mathbb{Q}}=I_{\text{NS}}$ and $\mathbb{Q}'$ a conditionally deterministic LR
theory achieving $\mathbb{E}(I)_{\mathbb{Q}'}=1$.  Then
$\mathbb{E}(I)_{(1-p)\mathbb{Q}'+p\mathbb{Q}'}=x$. Furthermore, there is a setting $xy$ at which
the LR theory's outcome is inside the support of the PR box's
outcomes. To see this, by symmetry it suffices to consider the PR box
of \eqref{e:PRbox}. Its outcomes are opposite at setting $11$ and
identical at the other three. A deterministic LR theory's outcomes are
opposite at an even number of settings, so either it is opposite at
setting $11$, or it is identical at one of the others. For setting
$xy$, the bound in \eqref{e:maxpir} is achieved for our definition of
$g$. Hence any other valid replacement $g'$ for $g$ must satisfy
$g'(x)\ge g(x)$ for $x\in(1,I_{\text{NS}}]$, and so \eqref{e:pirbound}
with $f(x)=-\log_2(g'(x))$ implies \eqref{e:pirbound} with
$f(x)=-\log_2(g(x))$. Thus the lower bound on $n$ derived above will
apply to $g'$ as well.

%TO GET OLD BIB, UN-FALSE THE FOLLOWING, AND COMMENT OUT EVERYTHING AFTER \fi (\begin{thebib} - \end)

%To get old bib (with bibtex), un-ignore the following 

\ignore{
%{\nolinenumbers
%\begin{singlespace}
%\bibliographystyle{Science}
\bibliographystyle{nature_nourl}
\bibliography{metabib}

\begin{thebibliography}{10}
\expandafter\ifx\csname url\endcsname\relax
  \def\url#1{\texttt{#1}}\fi
\expandafter\ifx\csname urlprefix\endcsname\relax\def\urlprefix{URL }\fi
\providecommand{\bibinfo}[2]{#2}
\providecommand{\eprint}[2][]{\url{#2}}

\bibitem{acin:2016}
\bibinfo{author}{Ac{\'\i}n, A.} \& \bibinfo{author}{Masanes, L.}
\newblock \bibinfo{title}{Certified randomness in quantum physics}.
\newblock \emph{\bibinfo{journal}{Nature}} \textbf{\bibinfo{volume}{540}},
  \bibinfo{pages}{213} (\bibinfo{year}{2016}).

\bibitem{pironio:2013}
\bibinfo{author}{Pironio, S.} \& \bibinfo{author}{Massar, S.}
\newblock \bibinfo{title}{Security of practical private randomness generation}.
\newblock \emph{\bibinfo{journal}{Phys. Rev. A}} \textbf{\bibinfo{volume}{87}},
  \bibinfo{pages}{012336} (\bibinfo{year}{2013}).

\bibitem{miller:2014}
\bibinfo{author}{Miller, C.} \& \bibinfo{author}{Shi, Y.}
\newblock \bibinfo{title}{Robust protocols for securely expanding randomness
  and distributing keys using untrusted quantum devices}.
\newblock \emph{\bibinfo{journal}{J. ACM}} \textbf{\bibinfo{volume}{63}},
  \bibinfo{pages}{33:1--33:63} (\bibinfo{year}{2016}).

\bibitem{colbeck:2011}
\bibinfo{author}{Colbeck, R.} \& \bibinfo{author}{Kent, A.}
\newblock \bibinfo{title}{Private randomness expansion with untrusted devices}.
\newblock \emph{\bibinfo{journal}{J. Phys. A: Math. Theor.}}
  \textbf{\bibinfo{volume}{44}}, \bibinfo{pages}{095305}
  (\bibinfo{year}{2011}).

\bibitem{pironio:2010}
\bibinfo{author}{Pironio, S.} \emph{et~al.}
\newblock \bibinfo{title}{Random numbers certified by {B}ell's theorem}.
\newblock \emph{\bibinfo{journal}{Nature}} \textbf{\bibinfo{volume}{464}},
  \bibinfo{pages}{1021--4} (\bibinfo{year}{2010}).

\bibitem{vazirani:2012}
\bibinfo{author}{Vazirani, U.} \& \bibinfo{author}{Vidick, T.}
\newblock \bibinfo{title}{Certifiable quantum dice - or, exponential randomness
  expansion}.
\newblock In \emph{\bibinfo{booktitle}{STOC'12 Proceedings of the 44th Annual
  ACM Symposium on Theory of Computing}}, \bibinfo{pages}{61}
  (\bibinfo{year}{2012}).

\bibitem{fehr:2013}
\bibinfo{author}{Fehr, S.}, \bibinfo{author}{Gelles, R.} \&
  \bibinfo{author}{Schaffner, C.}
\newblock \bibinfo{title}{Security and composability of randomness expansion
  from {B}ell inequalities}.
\newblock \emph{\bibinfo{journal}{Phys. Rev. A}} \textbf{\bibinfo{volume}{87}},
  \bibinfo{pages}{012335} (\bibinfo{year}{2013}).

\bibitem{chung:2014}
\bibinfo{author}{Chung, K.-M.}, \bibinfo{author}{Shi, Y.} \&
  \bibinfo{author}{Wu, X.}
\newblock \bibinfo{title}{Physical randomness extractors: Generating random
  numbers with minimal assumptions} (\bibinfo{year}{2014}).
\newblock \bibinfo{note}{ArXiv:1402.4797 [quant-ph]}.

\bibitem{nieto:2014}
\bibinfo{author}{Nieto-Silleras, O.}, \bibinfo{author}{Pironio, S.} \&
  \bibinfo{author}{Silman, J.}
\newblock \bibinfo{title}{Using complete measurement statistics for optimal
  device-independent randomness evaluation}.
\newblock \emph{\bibinfo{journal}{New Journal of Physics}}
  \textbf{\bibinfo{volume}{16}}, \bibinfo{pages}{013035}
  (\bibinfo{year}{2014}).

\bibitem{bancal:2014}
\bibinfo{author}{Bancal, J.-D.}, \bibinfo{author}{Sheridan, L.} \&
  \bibinfo{author}{Scarani, V.}
\newblock \bibinfo{title}{More randomness from the same data}.
\newblock \emph{\bibinfo{journal}{New Journal of Physics}}
  \textbf{\bibinfo{volume}{16}}, \bibinfo{pages}{033011}
  (\bibinfo{year}{2014}).

\bibitem{thinh:2016}
\bibinfo{author}{Thinh, L.}, \bibinfo{author}{de~la Torre, G.},
  \bibinfo{author}{Bancal, J.-D.}, \bibinfo{author}{Pironio, P.} \&
  \bibinfo{author}{Scarani, V.}
\newblock \bibinfo{title}{Randomness in post-selected events}.
\newblock \emph{\bibinfo{journal}{New Journal of Physics}}
  \textbf{\bibinfo{volume}{18}}, \bibinfo{pages}{035007}
  (\bibinfo{year}{2016}).

\bibitem{hensen:2015}
\bibinfo{author}{Hensen, B.} \emph{et~al.}
\newblock \bibinfo{title}{Loophole-free {Bell} inequality violation using
  electron spins separated by 1.3 km}.
\newblock \emph{\bibinfo{journal}{Nature}} \textbf{\bibinfo{volume}{526}},
  \bibinfo{pages}{682} (\bibinfo{year}{2015}).

\bibitem{shalm:2015}
\bibinfo{author}{Shalm, L.~K.} \emph{et~al.}
\newblock \bibinfo{title}{Strong loophole-free test of local realism}.
\newblock \emph{\bibinfo{journal}{Phys. Rev. Lett.}}
  \textbf{\bibinfo{volume}{115}}, \bibinfo{pages}{250402}
  (\bibinfo{year}{2015}).

\bibitem{giustina:2015}
\bibinfo{author}{Giustina, M.} \emph{et~al.}
\newblock \bibinfo{title}{Significant-loophole-free test of bell's theorem with
  entangled photons}.
\newblock \emph{\bibinfo{journal}{Phys. Rev. Lett.}}
  \textbf{\bibinfo{volume}{115}}, \bibinfo{pages}{250401}
  (\bibinfo{year}{2015}).

\bibitem{paar2010}
\bibinfo{author}{Paar, C.} \& \bibinfo{author}{Pelzl, J.}
\newblock \emph{\bibinfo{title}{Understanding Crypotgraphy}}
  (\bibinfo{publisher}{Springer-Verlag Berlin Heidelberg},
  \bibinfo{address}{New York}, \bibinfo{year}{2010}).

\bibitem{fischer:2011}
\bibinfo{author}{Fischer, M.~J.}, \bibinfo{author}{Iorga, M.} \&
  \bibinfo{author}{Peralta, R.}
\newblock \bibinfo{title}{A public randomness service}.
\newblock In \emph{\bibinfo{booktitle}{SECRYPT 2011}}, \bibinfo{pages}{434--38}
  (\bibinfo{year}{2011}).

\bibitem{BELL}
\bibinfo{author}{Bell, J.}
\newblock \bibinfo{title}{On the {E}instein {P}odolsky {R}osen paradox}.
\newblock \emph{\bibinfo{journal}{Physics}} \textbf{\bibinfo{volume}{1}},
  \bibinfo{pages}{195--200} (\bibinfo{year}{1964}).

\bibitem{bell:1985}
\bibinfo{author}{Bell, J.~S.}, \bibinfo{author}{Shimony, A.},
  \bibinfo{author}{Horne, M.~A.} \& \bibinfo{author}{Clauser, J.~F.}
\newblock \bibinfo{title}{An exchange on local beables}.
\newblock \emph{\bibinfo{journal}{Dialectica}} \textbf{\bibinfo{volume}{39}},
  \bibinfo{pages}{85--96} (\bibinfo{year}{1985}).

\bibitem{colbeck:2007}
\bibinfo{author}{Colbeck, R.}
\newblock \emph{\bibinfo{title}{Quantum and Relativistic Protocols for Secure
  Multi-Party Computation}}.
\newblock Ph.D. thesis, \bibinfo{school}{University of Cambridge}
  (\bibinfo{year}{2007}).

\bibitem{pearle:1970}
\bibinfo{author}{Pearle, P.~M.}
\newblock \bibinfo{title}{Hidden-variable example based upon data rejection}.
\newblock \emph{\bibinfo{journal}{Phys. Rev. D}} \textbf{\bibinfo{volume}{2}},
  \bibinfo{pages}{1418--1425} (\bibinfo{year}{1970}).

\bibitem{rosenfeld:2017}
\bibinfo{author}{Rosenfeld, W.} \emph{et~al.}
\newblock \bibinfo{title}{Event-ready {B}ell test using entangled atoms
  simultaneously closing detection and locality loopholes}.
\newblock \emph{\bibinfo{journal}{Phys. Rev. Lett.}}
  \textbf{\bibinfo{volume}{119}}, \bibinfo{pages}{010402}
  (\bibinfo{year}{2017}).

\bibitem{BBP}
\bibinfo{author}{Brunner, N.}, \bibinfo{author}{Cavalcanti, D.},
  \bibinfo{author}{Pironio, S.}, \bibinfo{author}{Scarani, V.} \&
  \bibinfo{author}{Wehner, S.}
\newblock \bibinfo{title}{Bell nonlocality}.
\newblock \emph{\bibinfo{journal}{Rev. Mod. Phys.}}
  \textbf{\bibinfo{volume}{86}}, \bibinfo{pages}{419--78}
  (\bibinfo{year}{2014}).

\bibitem{TBOUND}
\bibinfo{author}{Cirel'son, B.~S.}
\newblock \bibinfo{title}{Quantum generalizations of {B}ell's inequality}.
\newblock \emph{\bibinfo{journal}{Lett. Math. Phys.}}
  \textbf{\bibinfo{volume}{4}}, \bibinfo{pages}{93--100}
  (\bibinfo{year}{1980}).

\bibitem{navascues:2008}
\bibinfo{author}{Navascu\'es, M.}, \bibinfo{author}{Pironio, S.} \&
  \bibinfo{author}{Ac\'{\i}n, A.}
\newblock \bibinfo{title}{A convergent hierarchy of semidefinite programs
  characterizing the set of quantum correlations}.
\newblock \emph{\bibinfo{journal}{New Journal of Physics}}
  \textbf{\bibinfo{volume}{10}}, \bibinfo{pages}{073013}
  (\bibinfo{year}{2008}).

\bibitem{trevisan:2001}
\bibinfo{author}{Trevisan, L.}
\newblock \bibinfo{title}{Extractors and pseudorandom generators}.
\newblock \emph{\bibinfo{journal}{J. ACM}} \textbf{\bibinfo{volume}{48}},
  \bibinfo{pages}{860--79} (\bibinfo{year}{2001}).

\bibitem{mauerer:2012}
\bibinfo{author}{Mauerer, W.}, \bibinfo{author}{Portmann, C.} \&
  \bibinfo{author}{Scholz, V.}
\newblock \bibinfo{title}{A modular framework for randomness extraction based
  on {T}revisan's construction} (\bibinfo{year}{2012}).
\newblock \bibinfo{note}{ArXiv:1212.0520 [cs.IT]}.

\bibitem{coudron:2014}
\bibinfo{author}{Coudron, M.} \& \bibinfo{author}{Yuen, H.}
\newblock \bibinfo{title}{Infinite randomness expansion with a constant number
  of devices}.
\newblock In \emph{\bibinfo{booktitle}{STOC'14 Proceedings of the 46th Annual
  ACM Symposium on Theory of Computing}}, \bibinfo{pages}{427--36}
  (\bibinfo{year}{2014}).

\bibitem{dupuis:2016}
\bibinfo{author}{Dupuis, F.}, \bibinfo{author}{Fawzi, O.} \&
  \bibinfo{author}{Renner, R.}
\newblock \bibinfo{title}{Entropy accumulation}  (\bibinfo{year}{2016}).
\newblock \bibinfo{note}{ArXiv:1607.01796 [quant-ph]}.

\bibitem{arnon:2016}
\bibinfo{author}{Arnon-Friedman, R.}, \bibinfo{author}{Renner, R.} \&
  \bibinfo{author}{Vidick, T.}
\newblock \bibinfo{title}{Simple and tight device-independent security proofs}
  (\bibinfo{year}{2016}).
\newblock \bibinfo{note}{ArXiv:1607.01797 [quant-ph]}.

\bibitem{miller2:2014}
\bibinfo{author}{Miller, C.} \& \bibinfo{author}{Shi, Y.}
\newblock \bibinfo{title}{Universal security for randomness expansion from the
  spot-checking protocol} (\bibinfo{year}{2014}).
\newblock \bibinfo{note}{ArXiv:1411.6608 [quant-ph]}.

\bibitem{zhang:2011}
\bibinfo{author}{Zhang, Y.}, \bibinfo{author}{Glancy, S.} \&
  \bibinfo{author}{Knill, E.}
\newblock \bibinfo{title}{Asymptotically optimal data analysis for rejecting
  local realism}.
\newblock \emph{\bibinfo{journal}{Phys. Rev. A}} \textbf{\bibinfo{volume}{84}},
  \bibinfo{pages}{062118} (\bibinfo{year}{2011}).

\bibitem{trevisan:2000}
\bibinfo{author}{Trevisan, L.} \& \bibinfo{author}{Vadhan, S.}
\newblock \bibinfo{title}{Extracting randomness from samplable distributions}.
\newblock In \emph{\bibinfo{booktitle}{FOCS '00 Proceedings of the 41st Annual
  Symposium on Foundations of Computer Science}} (\bibinfo{publisher}{IEEE
  Computer Society}, \bibinfo{address}{Washington, DC}, \bibinfo{year}{2000}).

\bibitem{renner:2006}
\bibinfo{author}{Renner, R.}
\newblock \emph{\bibinfo{title}{Security of Quantum Key Distribution}}.
\newblock Ph.D. thesis, \bibinfo{school}{ETH}, \bibinfo{address}{ETH,
  Switzerland} (\bibinfo{year}{2006}).
\newblock \bibinfo{note}{ArXiv:quant-ph/0512258}.

\bibitem{marsili:2013}
\bibinfo{author}{Marsili, F.} \emph{et~al.}
\newblock \bibinfo{title}{Detecting single infrared photons with 93\% system
  efficiency}.
\newblock \emph{\bibinfo{journal}{Nature Photonics}}
  \textbf{\bibinfo{volume}{7}}, \bibinfo{pages}{210--214}
  (\bibinfo{year}{2013}).
  
\end{thebibliography}

\begin{thebibliography}
\expandafter\ifx\csname url\endcsname\relax
  \def\url#1{\texttt{#1}}\fi
\expandafter\ifx\csname urlprefix\endcsname\relax\def\urlprefix{URL }\fi
\providecommand{\bibinfo}[2]{#2}
\providecommand{\eprint}[2][]{\url{#2}}
%Don't know what the above does but it is necessary for bib to work.

\setcounter{enumiv}{34}

\bibitem{levin:2009}
\bibinfo{author}{Levin, D.~A.}, \bibinfo{author}{Peres, Y.} \&
  \bibinfo{author}{Wilmer, E.~L.}
\newblock \emph{\bibinfo{title}{Markov chains and mixing times}}
  (\bibinfo{publisher}{American Mathematical Soc.}, \bibinfo{year}{2009}).

\bibitem{pardo:1997}
\bibinfo{author}{Pardo, M.} \& \bibinfo{author}{Vajda, I.}
\newblock \bibinfo{title}{About distances of discrete distributions satisfying
  the data processing theorem of information theory}.
\newblock \emph{\bibinfo{journal}{IEEE transactions on information theory}}
  \textbf{\bibinfo{volume}{43}}, \bibinfo{pages}{1288--1293}
  (\bibinfo{year}{1997}).

\bibitem{PRBOX}
\bibinfo{author}{Popescu, S.} \& \bibinfo{author}{Rohrlich, D.}
\newblock \bibinfo{title}{Quantum nonlocality as an axiom}.
\newblock \emph{\bibinfo{journal}{Found. Phys.}} \textbf{\bibinfo{volume}{24}},
  \bibinfo{pages}{379--85} (\bibinfo{year}{1994}).

\bibitem{barrett:2005}
\bibinfo{author}{Barrrett, J.} \emph{et~al.}
\newblock \bibinfo{title}{Nonlocal correlations as an information-theoretic
  resource}.
\newblock \emph{\bibinfo{journal}{Phys. Rev. A}} \textbf{\bibinfo{volume}{71}},
  \bibinfo{pages}{022101} (\bibinfo{year}{2005}).

\bibitem{bierhorst:2016}
\bibinfo{author}{Bierhorst, P.}
\newblock \bibinfo{title}{Geometric decompositions of {B}ell polytopes with
  practical applications}.
\newblock \emph{\bibinfo{journal}{J. Phys. A: Math. Theor.}}
  \textbf{\bibinfo{volume}{49}}, \bibinfo{pages}{215301}
  (\bibinfo{year}{2016}).

\bibitem{shafer:2009}
\bibinfo{author}{Shafer, G.}, \bibinfo{author}{Shen, A.},
  \bibinfo{author}{Vereshchagin, N.} \& \bibinfo{author}{Vovk, V.}
\newblock \bibinfo{title}{Test martingales, {Bayes} factors and $p$-values}.
\newblock \emph{\bibinfo{journal}{Statistical Science}}
  \textbf{\bibinfo{volume}{26}}, \bibinfo{pages}{84--101}
  (\bibinfo{year}{2011}).

\bibitem{vadhan:2012}
\bibinfo{author}{Vadhan, S.~P.}
\newblock \emph{\bibinfo{title}{Pseudorandomness}}, vol.~\bibinfo{volume}{7} of
  \emph{\bibinfo{series}{Foundations and Trends in Theoretical Computer
  Science}} (\bibinfo{year}{2012}).

\bibitem{ma:2012}
\bibinfo{author}{Ma, X.}, \bibinfo{author}{Zhang, Z.} \& \bibinfo{author}{Tan,
  X.}
\newblock \bibinfo{title}{Explicit combinatorial design}
  (\bibinfo{year}{2012}).
\newblock \bibinfo{note}{ArXiv:1109.6147v2 [math.CO]}.

\bibitem{kessler:2017}
\bibinfo{author}{Kessler, M.} \& \bibinfo{author}{Arnon-Friedman, R.}
\newblock \bibinfo{title}{Device-independent randomness amplification and
  privatization}  (\bibinfo{year}{2017}).
\newblock \bibinfo{note}{ArXiv:1705.04148 [quant-ph]}.

\bibitem{konig:2008}
\bibinfo{author}{K\"onig, R.} \& \bibinfo{author}{Terhal, B.}
\newblock \bibinfo{title}{The bounded-storage model in the presence of a
  quantum adversary}.
\newblock \emph{\bibinfo{journal}{IEEE T. Inform. Theory}}
  \textbf{\bibinfo{volume}{54}}, \bibinfo{pages}{749--62}
  (\bibinfo{year}{2008}).

\bibitem{portmann:2014}
\bibinfo{author}{Portmann, C.} \& \bibinfo{author}{Renner, R.}
\newblock \bibinfo{title}{Cryptographic security of quantum key distribution}
  (\bibinfo{year}{2014}).
\newblock \bibinfo{note}{ArXiv:1409.3525 [quant-ph]}.

\bibitem{randomv1}
\bibinfo{author}{Bierhorst, P.} \emph{et~al.}
\newblock \bibinfo{title}{Experimentally generated random numbers certified by
  the impossibility of superluminal signaling}  (\bibinfo{year}{2017}).
\newblock \bibinfo{note}{ArXiv:1702.05178 [quant-ph]}.

\bibitem{CHSH}
\bibinfo{author}{Clauser, J.}, \bibinfo{author}{Horne, A.},
  \bibinfo{author}{Shimony, A.} \& \bibinfo{author}{Holt, R.}
\newblock \bibinfo{title}{Proposed experiment to test local hidden-variable
  theories}.
\newblock \emph{\bibinfo{journal}{Phys. Rev. Lett.}}
  \textbf{\bibinfo{volume}{23}}, \bibinfo{pages}{880--884}
  (\bibinfo{year}{1969}).

\bibitem{nieto:2016}
\bibinfo{author}{Nieto-Silleras, O.}, \bibinfo{author}{Bamps, C.},
  \bibinfo{author}{Silman, J.} \& \bibinfo{author}{Pironio, S.}
\newblock \bibinfo{title}{Device-independent randomness generation from several
  {B}ell estimators} (\bibinfo{year}{2016}).
\newblock \bibinfo{note}{ArXiv:1611.00352 [quant-ph]}.

\end{thebibliography}
%\end{singlespace}
%}
}

\nolinenumbers
\begin{singlespace}

\end{singlespace}

\end{document}